\documentclass{ifacconf}

\usepackage{thmtools}
\usepackage{amsmath}
\usepackage{amssymb}
\usepackage{color}
\usepackage{xcolor}
\usepackage{graphicx}
\usepackage{comment}
\usepackage[authoryear]{natbib}
\usepackage{url}
\usepackage[authoryear]{natbib}
\ifdefined\pdfminorversion\pdfminorversion=4\fi
\ifdefined\pdfobjcompresslevel\pdfobjcompresslevel=0\fi

\newcommand{\D}{\mathbb D}

\newcommand{\C}{\mathbb C}
\newcommand{\R}{\mathbb R}

\newcommand{\norm}[1]{\left\lVert#1\right\rVert}

\newtheorem{theorem}{Theorem}
\newtheorem{corollary}{Corollary}

\newtheorem{lemma}{Lemma}
\newtheorem{definition}{Definition}
\newtheorem{assumption}{Assumption}

\newtheorem{remark}{Remark}

\def\beq{\begin{equation}}
\def\eeq{\end{equation}}

\newcommand{\Sclass}{\mathcal{S}}
\newcommand{\NoiseSet}{\mathcal{N}}
\newcommand{\Tset}{\mathcal{T}}
\newenvironment{proof}[1][Proof]
  {\begin{pf*}{#1.}}
  {\unskip\nobreak\hfill$\blacksquare$\end{pf*}}

\newcommand{\Neps}{\mathcal{N}_\varepsilon}

\begin{document}
\begin{frontmatter}
\title{On Sample Complexity Bounds for Frequency-Domain System Identification} 

\author{Arya Honarpisheh}
\author{Mario Sznaier}
\address{ECE Department, Northeastern University, Boston, MA 02115 USA
 (e-mail: honarpisheh.a@northeastern.edu, m.sznaier@northeastern.edu).}\thanks{This work was supported in part by NSF grant CMMI--2208182 and ONR grant N00014-21-1-2431.}

\begin{abstract}

We study the sample complexity of frequency-domain system identification with bounded measurement noise. We focus on interpolatory identification algorithms, which return models consistent with both the data and a priori constraints. For equally spaced samples, we sharpen previous worst-case bounds on the consistency-set radius. Its excess above the noise level decays as $N^{-2}$, where $N$ is the number of samples, and a matching lower bound shows that this rate is tight. We then examine whether bounded stochastic noise allows identification below the noise level. We show that high-probability error bounds can converge to zero. Their convergence rate depends on how strongly the noise distribution is concentrated near the boundary of the noise set. For example, noise uniformly distributed on a disk gives an upper bound of order $(\log N/N)^{2/3}$. Numerical experiments examine these convergence rates and the effect of the noise distribution on the identification error.

\end{abstract}
\begin{keyword}
Frequency domain identification, bounded error identification, error quantification,
set membership estimation, sample complexity, stochastic noise.
\end{keyword}
\end{frontmatter}

\section{Introduction}

Identifying linear time invariant (LTI) systems from noisy data is an important problem in control, signal processing, and machine learning \citep{Ljung2010}. Classical identification techniques typically rely on stochastic noise models and asymptotic arguments to characterize estimation accuracy under specific distributional assumptions \citep{TsiamisEtAl2023}. In contrast, practical applications may offer only limited statistical information about the noise, while bounds on its magnitude may be available. This has motivated a substantial body of work on system identification with disturbances whose distribution is unknown but whose magnitude is bounded \citep{MilaneseTaragna2005}.

Identification with bounded noise naturally leads to a set-theoretic formulation, where the available data define a consistency set of all systems compatible with prior assumptions and the observed measurements. From this viewpoint, no single model can be distinguished without additional criteria, and the fundamental performance limits are dictated by the geometry of this set. This paradigm has a long history in robust and worst-case identification, particularly in the context of $\mathcal{H}_\infty$ system theory and set-membership estimation, where explicit identification error bounds support robust control design \citep{MilaneseVicino1991,HelmickiJacobsonNett1991}.

Recent theoretical advances have revitalized interest in identification with bounded noise by establishing sharper performance bounds and new algorithmic strategies. For instance, \citet{ZengYuOzay2025} derived a $1/N$ lower bound on the best achievable estimation error from $N$ time samples under suitable assumptions on bounded random noise. They also showed that ordinary least squares can remain limited to the slower $1/\sqrt{N}$ rate. Parallel developments in set membership estimation have addressed continuous time MIMO systems using Tustin discretization and convex relaxations \citep{CeroneFossonPirreraRegruto2025}. Additionally, \citet{XuLi2025} analyzed sample complexity and convergence rates when disturbances are bounded by more general convex sets.

In parallel, frequency domain identification offers an attractive alternative to time domain approaches, especially when frequency response data can be directly measured or a model is needed over a selected frequency band \citep{SchoukensPintelonRolain2004}. In this context, \citet{ChenNettFan1995} derived consistency tests, interpolatory algorithms, and error bounds, given prior bounds on stability and gain. Related work considers nonuniform frequency measurements \citep{AkcayGuKhargonekar1994} and combined time and frequency domain data \citep{ParriloEtAl1998}. More recently, \citet{HonarpishehSznaier2025} derived sample complexity bounds for a method that promotes low order models from repeated noisy frequency response measurements.

Our analysis concerns error bounds that apply to every interpolatory identification algorithm. Sections~\ref{sec:notation} and~\ref{sec:problem} introduce the notation and formulate these bounds in terms of consistency sets. In Section~\ref{sec:results}, we strengthen earlier bounds on the radius of the consistency set for zero measurements. Previous upper bounds converge to the noise level at rate $O(N^{-1})$ \citep{ChenGu2002}, or at rate $O(N^{-2})$ to a larger limit of approximately three times the noise level for small noise \citep{ChenNettFan1995}. We obtain $O(N^{-2})$ convergence all the way to the noise level, and a matching lower bound establishes the tight rate $\Theta(N^{-2})$. Section~\ref{sec:stochastic} then considers bounded stochastic noise. We derive upper bounds that can converge to zero and relate their rates to the concentration of the noise distribution near the boundary of the noise set. For uniform disk noise, the upper bound has rate $O((\log N/N)^{2/3})$. A lower bound for uniform disk noise shows why the deterministic quadratic convergence rate need not extend to this setting. The experiments in Section~\ref{sec:numerical} examine both regimes and compare the effect of different noise distributions on the identification error. Section~\ref{sec:conclusion} concludes the paper.

\section{Notation}\label{sec:notation}
The imaginary unit is denoted by $j$.
Let $\D_\rho=\{z\in\C:|z|<\rho\}$ and $\D=\D_1$. The symbols $\overline z$, $\Re z$, and $\Im z$ denote the complex conjugate, real part, and imaginary part of $z$, respectively. We represent the transfer function of a causal discrete-time single-input single-output (SISO) LTI system with impulse response $\{h_k\}_{k\ge0}$ as
\begin{equation}\label{eq:delay}
 H(z)=\sum_{k=0}^{\infty}h_k z^k,\qquad h_k\in\R.
\end{equation}
For $\rho\ge1$, the space $\mathcal H_{\infty,\rho}$ consists of such transfer functions analytic on $\D_\rho$ with finite norm
\[
 \norm{H}_{\infty,\rho}=\sup_{|z|<\rho}|H(z)|.
\]
For $\rho=1$, we write $\mathcal{H}_\infty$ instead of $\mathcal H_{\infty,1}$. For $\rho>1$, $\norm{H}_\infty=\max_{\theta\in[0,2\pi]}|H(e^{j\theta})|$. For $K>0$, define
\begin{equation}\label{eq:boundedclass}
 \mathcal H_{\infty,\rho}^{K}
 =\{H\in\mathcal H_{\infty,\rho}:\norm{H}_{\infty,\rho}\le K\}.
\end{equation}

For a set $\mathcal U$ of transfer functions, define its diameter and radius by
\begin{align*}
 d(\mathcal U)&=\sup_{G,H\in\mathcal U}\norm{G-H}_\infty,\\
 r(\mathcal U)&=\inf_{C\in\mathcal H_\infty}\sup_{H\in\mathcal U}\norm{H-C}_\infty.
 \end{align*}
A minimizer $C$ is called a Chebyshev center of $\mathcal U$ and need not belong to $\mathcal U$ itself.

\section{Problem Formulation}\label{sec:problem}
Consider a discrete-time SISO LTI system $H_0\in\mathcal{S}$, where the set of systems satisfying the prior information is
\begin{equation}\label{eq:prior}
 \Sclass=\mathcal H_{\infty,\rho}^{K},\qquad \rho>1,\quad K>0.
\end{equation}
Every $H\in\Sclass$ has an exponentially decaying impulse response satisfying $|h_k|\le K\rho^{-k}$ for all $k\ge0$. Measurements are available at distinct frequencies $\Theta=\{\theta_n\}_{n=0}^{N-1}$:
\begin{equation}\label{eq:data1}
 y_n=H_0(e^{j\theta_n})+\eta_n,\qquad
 \eta_n\in\NoiseSet_\varepsilon=\{w\in\C:|w|\le\varepsilon\}.
\end{equation}
We allow any noise bound $\varepsilon\ge0$. The frequencies are ordered in $[0,2\pi)$, and their maximum separation on the unit circle is
\begin{equation}\label{eq:gap}
 \Delta=\max_{0\le n<N}(\theta_{n+1}-\theta_n),
 \qquad \theta_N=\theta_0+2\pi.
\end{equation}
For equally spaced samples, $\Delta=2\pi/N$. Here $N$ is the number of frequency samples on the full unit circle. Since the impulse response is real, $H(e^{j(2\pi-\theta)})=\overline{H(e^{j\theta})}$. Measurements on $[0,\pi]$ can therefore be completed by conjugation. In this case, data and noise at paired frequencies satisfy $y_k=\overline{y_n}$ and $\eta_k=\overline{\eta_n}$ when $\theta_k=2\pi-\theta_n$. At $0$ and $\pi$, the measurements and noise are real. The count $N$ includes conjugate copies.

Given $y=(y_0,\ldots,y_{N-1})$, the consistency set is
\begin{equation}\label{eq:consistency}
 \Tset(y)=\{H\in\Sclass:y_n-H(e^{j\theta_n})\in\NoiseSet_\varepsilon
 \text{ for all }n\}.
\end{equation}
Let $\mathcal Y=\{y:\Tset(y)\ne\varnothing\}$ denote the set of admissible measurements. An identification algorithm $\mathcal A$ returns a system $\mathcal A(y)$ for each $y\in\mathcal Y$. It is consistent, or interpolatory, if $\mathcal A(y)\in\Tset(y)$ for every $y\in\mathcal Y$.

For given measurements $y$, the local worst-case identification error is
\begin{equation}\label{eq:localerror}
 e(\mathcal A,y)=\sup_{H\in\Tset(y)}\norm{H-\mathcal A(y)}_\infty.
\end{equation}
The global worst-case identification error is
\begin{equation}\label{eq:globalerror}
 E(\mathcal A)=\sup_{y\in\mathcal Y}e(\mathcal A,y).
\end{equation}

The radius and diameter of information are defined by
\begin{align}
 \mathcal R(I)&=\sup_{y\in\mathcal Y}r(\Tset(y)),\label{eq:inforadius}\\
 \mathcal D(I)&=\sup_{y\in\mathcal Y}d(\Tset(y)),\label{eq:infodiameter}
\end{align}
where $I$ denotes the information specified by the prior class $\Sclass$, the sampling frequencies $\Theta$, and the noise set $\NoiseSet_\varepsilon$. For any consistent algorithm, the local error satisfies \citep[Section~10.2.3]{SanchezPenaSznaier1998}
\begin{equation}\label{eq:localgeometry}
 r(\Tset(y))\le e(\mathcal A,y)\le d(\Tset(y))\le2r(\Tset(y)).
\end{equation}
Taking the supremum over admissible measurements gives
\begin{equation}\label{eq:globalgeometry}
 \mathcal R(I)\le E(\mathcal A)\le\mathcal D(I)\le2\mathcal R(I).
\end{equation}

In the stochastic setting, $\varepsilon>0$ and the noise vector $\eta$ in \eqref{eq:data1} has a specified joint distribution satisfying $|\eta_n|\le\varepsilon$ almost surely. The measurements $y$ and the consistency set $\Tset(y)$ are then random. Probability is taken over $\eta$, with the true system $H_0\in\Sclass$ fixed.
\begin{definition}\label{def:stochasticdiameter}
For $0<\delta<1$, a deterministic bound $b_N(\delta)\ge0$ is a high-probability diameter bound if
\begin{equation}\label{eq:stochasticdiameter}
 \Pr_\eta\!\left\{d(\Tset(y))\le b_N(\delta)\right\}\ge1-\delta
 \quad\text{for every }H_0\in\Sclass.
\end{equation}
\end{definition}
Since $H_0\in\Tset(y)$ almost surely, \eqref{eq:localerror} and \eqref{eq:localgeometry} imply that $b_N(\delta)$ also bounds the identification error of every consistent algorithm with probability at least $1-\delta$.

Our objective is to relate the number and placement of frequency samples to the accuracy guaranteed by consistent identification algorithms. In the worst case, we seek upper and lower bounds on $E(\mathcal A)$ and determine how they depend on the prior information and the noise bound. In the stochastic case, we derive high-probability upper bounds on the consistency-set diameter and use lower bounds to rule out faster convergence rates. We ask under what conditions these bounds can fall below the noise level and converge to zero, and how their convergence depends on the noise distribution and the confidence level.

\section{Worst-Case Identification Bounds}\label{sec:results}
We first express the identification-error bounds in terms of the consistency set for zero measurements. This reduces the analysis to bounding the norm of systems whose sampled frequency responses are bounded by the noise level. We then derive an upper bound using second-derivative estimates, construct systems that give lower bounds, and combine these results to determine the asymptotic convergence rates of the global identification error.

\subsection{Consistency sets and identification error}
The formulation in Section~\ref{sec:problem} satisfies the following assumptions, which allow us to apply standard bounds on the identification error \citep[Chapter~10]{SanchezPenaSznaier1998}.

\begin{assumption}\label{ass:consistency}
The true system belongs to the consistency set: $H_0\in\Tset(y)$.
\end{assumption}

\begin{assumption}\label{ass:convexity}
The prior set $\Sclass$ and the noise set $\NoiseSet_\varepsilon$ are convex and symmetric about zero.
\end{assumption}

Under these assumptions, the following lemma gives an alternative to \eqref{eq:globalgeometry} based on the consistency set for zero measurements.
\begin{lemma}\label{lem:geometry}
Under Assumptions~\ref{ass:consistency}--\ref{ass:convexity}, every interpolatory identification algorithm satisfies \citep[Lemmas~10.3--10.4]{SanchezPenaSznaier1998}
\begin{equation}\label{eq:algorithm}
 \sup_{H\in\Tset(0)}\norm{H}_\infty
 \le E(\mathcal A)
 \le2\sup_{H\in\Tset(0)}\norm{H}_\infty,
\end{equation}
where the consistency set for zero measurements is
\begin{equation}\label{eq:zeroconsistency}
 \Tset(0)=\{H\in\Sclass:|H(e^{j\theta_n})|\le\varepsilon
                  \text{ for all }n\}.
\end{equation}
Moreover,
\begin{align}\label{eq:diameter}
 r(\Tset(0))&=\sup_{H\in\Tset(0)}\norm{H}_\infty,\nonumber\\
 \mathcal D(I)&=d(\Tset(0))=2r(\Tset(0)).
\end{align}
\end{lemma}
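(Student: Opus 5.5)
The whole argument rests on one fact: for convex, symmetric $\Sclass$ and $\NoiseSet_\varepsilon$, the zero-measurement set $\Tset(0)$ is the ``worst'' consistency set. I will first show that $\Tset(0)$ is convex, symmetric about $0$, and that $0\in\Tset(0)$, so $0\in\mathcal Y$. Then I will bound every other $\Tset(y)$ by it. If $G,H\in\Tset(y)$, then $(G-H)/2\in\Sclass$ by convexity and symmetry of $\Sclass$. Also
\[
\tfrac12\bigl(G(e^{j\theta_n})-H(e^{j\theta_n})\bigr)=\tfrac12\bigl((y_n-H(e^{j\theta_n}))-(y_n-G(e^{j\theta_n}))\bigr)\in\NoiseSet_\varepsilon,
\]
by the same properties of $\NoiseSet_\varepsilon$. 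Hence $(G-H)/2\in\Tset(0)$, which gives
\[
d(\Tset(y))\le 2\sup_{F\in\Tset(0)}\norm{F}_\infty \quad\text{for every } y\in\mathcal Y.
\]

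Next I will compute the geometry of $\Tset(0)$. By symmetry, $H\in\Tset(0)$ implies $-H\in\Tset(0)$. So $d(\Tset(0))\ge\sup_H\norm{H-(-H)}_\infty=2\sup_H\norm{H}_\infty$, and the previous step gives the reverse inequality. For the radius, take any candidate center $C$. Then
\[
\max\bigl(\norm{H-C}_\infty,\norm{-H-C}_\infty\bigr)\ge\tfrac12\norm{2H}_\infty=\norm{H}_\infty
\]
by the triangle inequality, so $r(\Tset(0))\ge\sup_H\norm{H}_\infty$. Choosing $C=0$ gives equality. This proves \eqref{eq:diameter}, with $\mathcal D(I)=\sup_y d(\Tset(y))=d(\Tset(0))$, the supremum being attained at $y=0$ by the first step.

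For \eqref{eq:algorithm}, the upper bound follows from \eqref{eq:localgeometry}, which gives $e(\mathcal A,y)\le d(\Tset(y))\le 2\sup_{\Tset(0)}\norm{H}_\infty$; taking the supremum over $y$ finishes it. The lower bound follows from $E(\mathcal A)\ge e(\mathcal A,0)\ge r(\Tset(0))=\sup_{\Tset(0)}\norm{H}_\infty$, again by \eqref{eq:localgeometry} and the radius identity. Assumption~\ref{ass:consistency} is only needed to interpret $E(\mathcal A)$ as bounding the true error.

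The main subtlety is confirming that the center class is right. The Chebyshev center is taken over $\mathcal H_\infty$ rather than over $\Sclass$, and the triangle-inequality argument is insensitive to this distinction. The other point to check is that $0\in\mathcal Y$, which holds because $0\in\Sclass$. Everything else is routine convexity and symmetry bookkeeping.
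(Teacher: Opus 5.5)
Your proof is correct. The paper gives no argument of its own for this lemma: it invokes Lemmas~10.3--10.4 of S\'anchez-Pe\~na and Sznaier together with \eqref{eq:localgeometry}. Your midpoint step ($G,H\in\Tset(y)\Rightarrow (G-H)/2\in\Tset(0)$), combined with the pairing $H\leftrightarrow -H$ and the center $C=0$, is the standard convex--symmetric argument behind those cited results, so it supplies the omitted justification in a self-contained way.
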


The set $\Tset(0)$ consists of systems whose responses at the sampled frequencies can be canceled by admissible noise. Its radius is therefore the largest $H_\infty$ norm compatible with these sample constraints. Lemma~\ref{lem:geometry} allows us to bound the identification error by studying this single set, without taking the supremum over measurement vectors required by $\mathcal R(I)$. In particular, \eqref{eq:diameter} identifies $2r(\Tset(0))$ with the diameter of information.

Since the constant system $H(z)=\min\{\varepsilon,K\}$ belongs to $\Tset(0)$, its $H_\infty$ norm gives a lower bound on $r(\Tset(0))$. The prior bounds the $H_\infty$ norm of every system in $\Tset(0)$ by $K$, giving the upper bound. Combining these two observations yields
\begin{equation}\label{eq:trivial}
 \min\{\varepsilon,K\}\le r(\Tset(0))\le K.
\end{equation}
Consequently,
\begin{equation}\label{eq:geometrycases}
\begin{cases}
 r(\Tset(0))=K,\quad\mathcal D(I)=2K, & \varepsilon\ge K,\\[3pt]
 \varepsilon\le r(\Tset(0))\le K, & 0\le\varepsilon<K.
\end{cases}
\end{equation}
The radius of information satisfies $r(\Tset(0))\le\mathcal R(I)\le K$. For $\varepsilon<K$, the remaining question is how much $r(\Tset(0))$ can exceed $\varepsilon$, given the number and placement of frequency samples.

With this characterization, we now derive upper and lower bounds on $r(\Tset(0))$. We will consider a generic configuration for the sampling frequencies and then specialize the results to the equally spaced case. The following results show that, for equally spaced samples and $0<\varepsilon<K$, the upper bound approaches $\varepsilon$ at the rate $O(N^{-2})$. Further, we show via an example that this decay rate is tight.

\subsection{Upper bound on $r(\Tset(0))$}
To bound the increase in response magnitude between samples, we use a Taylor argument similar to that in \citet[proofs of Theorems~1 and~2]{SharifGharaviAlkhansariKhalaj2003}. We compare the maximum response magnitude with a nearby sampled value by rotating the complex frequency response so that its value at a maximizing frequency is real and nonnegative. Its real part then has a maximum at that frequency, so its first derivative is zero. Since this first derivative vanishes, a second derivative bound controls the difference between the maximum and the nearby sampled value by a term proportional to $\Delta^2$, where $\Delta$ is the maximum angular distance between adjacent samples. For equally spaced samples, $\Delta=2\pi/N$, yielding the $O(N^{-2})$ convergence rate. The following lemma provides the required derivative bound.
\begin{lemma}\label{lem:curvature}
For $H\in\Sclass$, let $F(\theta)=H(e^{j\theta})$. Then
\begin{equation}\label{eq:curvature}
 |F''(\theta)|\le \frac{K\rho}{(\rho-1)^2}
 \left(1-\frac{|F(\theta)|^2}{K^2}\right).
\end{equation}
\end{lemma}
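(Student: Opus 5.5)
The plan is to normalize and reduce the estimate to Schwarz--Pick type inequalities for a self-map of the unit disk. Set $g=H/K$, which is analytic on $\D_\rho$ with $|g|\le 1$; if $|g|$ attains the value $1$ anywhere, $g$ is a unimodular constant and \eqref{eq:curvature} holds trivially, so assume $g:\D_\rho\to\D$. Fix $\theta$, put $z_0=e^{j\theta}$ and $a=g(z_0)$. Writing $F(\theta)=Kg(e^{j\theta})$ and differentiating twice gives
\[
 F''(\theta)=-K\bigl(z_0g'(z_0)+z_0^2g''(z_0)\bigr),
\]
so it suffices to show $|z_0g'(z_0)+z_0^2g''(z_0)|\le \rho(1-|a|^2)/(\rho-1)^2$.

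\textbf{Step 1: factor out $1-|a|^2$ with a M\"obius map.} Define $h=\phi_a\circ g$ with $\phi_a(w)=(w-a)/(1-\overline a w)$. Then $h:\D_\rho\to\D$ and $h(z_0)=0$. Inverting, $g=\phi_{-a}\circ h$. A direct computation at $z_0$ gives
\[
 g'(z_0)=(1-|a|^2)h'(z_0),\qquad g''(z_0)=(1-|a|^2)\bigl(h''(z_0)-2\overline a\,h'(z_0)^2\bigr).
\]
Every term therefore carries the factor $1-|a|^2$. The claim reduces to a bound, uniform in $|a|<1$, on $|z_0h'+z_0^2h''-2\overline a z_0^2h'^2|$ at $z_0$, for maps $h$ into $\D$ that vanish at $z_0$.

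\textbf{Step 2: coefficient bounds on a disk about $z_0$.} The disk of radius $r=\rho-1$ about $z_0$ lies in $\D_\rho$. Expand $h(z_0+\zeta)=c_1\zeta+c_2\zeta^2+\cdots$. Schwarz's lemma gives $|c_1|\le 1/r$. The Wiener/Schwarz--Pick refinement gives $|c_2|\le (1-r^2|c_1|^2)/r^2$, obtained by applying the Schwarz--Pick bound to $h(z_0+r\zeta)/\zeta$. Substituting $h'=c_1$ and $h''=2c_2$ then bounds the bracket in Step 1 by a function of $|c_1|$ alone.

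\textbf{Main obstacle.} The crude triangle inequality only gives $|c_1|+2|c_2|+2|c_1|^2\le |c_1|+2/r^2\le (\rho+1)/(\rho-1)^2$. This misses the stated constant $\rho/(\rho-1)^2$ by replacing $\rho$ with $\rho+1$.

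To close the gap I would stop treating $z_0h'$ and $z_0^2h''$ separately, since the target constant suggests the whole disk $\D_\rho$ must be used rather than a disk about $z_0$. First I would note that $z_0g'+z_0^2g''=(z\,d/dz)^2g$ at $z_0$. This is the second derivative of $w\mapsto g(z_0e^{w})$ at $w=0$. I would then bound this operator directly, in one of two ways:
\begin{itemize}
\item apply the second-order Schwarz--Pick inequality $|\psi''(0)|\le 2(1-|\psi(0)|^2)-2|\psi'(0)|^2/(1+|\psi(0)|)$ to a composition $\psi=g\circ\mu$, with $\mu$ a conformal map of $\D$ onto a region containing a neighborhood of $z_0$ that is adapted to the logarithmic coordinate, for instance a suitable disk in the annulus or strip $|\Re w|<\log\rho$;
\item use the Poisson--Herglotz representation of $(1+g)/(1-g)$, or of $\phi_a\circ g$, on $\D_\rho$, which turns $(z\,d/dz)^2$ into an explicit kernel whose total mass can be computed exactly.
\end{itemize}
The negative $|\psi'(0)|^2$ term in the sharp second-order inequality should absorb the $|c_1|$ contribution and yield exactly $\rho/(\rho-1)^2$. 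I expect this constant-matching step to be the only genuinely delicate part of the proof; the reductions in Steps 1 and 2 are routine.
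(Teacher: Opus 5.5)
Your proposal has a genuine gap: it stops exactly where the stated constant must be proved. Steps~1 and~2 are correct, but, as you compute, they give only $(\rho+1)/(\rho-1)^2$. The two routes you then offer (a conformal map adapted to the logarithmic coordinate, or a Herglotz kernel) are not carried out, so nothing shows that either yields $\rho/(\rho-1)^2$.

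Your diagnosis of the loss is also off. The loss does not come from applying the triangle inequality to $z_0g'$ and $z_0^2g''$ separately, and no absorption by the negative $|\psi'(0)|^2$ term is needed. It comes entirely from Step~2, where you restrict to the disk of radius $\rho-1$ about $z_0$. The resulting bounds $|H'(z_0)|\le K(1-|a|^2)/(\rho-1)$ and $|H''(z_0)|\le 2K(1-|a|^2)/(\rho-1)^2$ are each too large by exactly the factor $(\rho+1)/\rho$.

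The missing step is the one your remark about using the whole disk points to, done in the simplest way: estimate $H'$ and $H''$ at $z_0$ regarded as an interior point of all of $\D_\rho$. The paper sets $f(w)=H(\rho w)/K$ and applies Ruscheweyh's inequality $|f^{(m)}(w)|/m!\le(1-|f(w)|^2)/\bigl((1-|w|)^{m-1}(1-|w|^2)\bigr)$ for $m=1,2$ at $w_0=z_0/\rho$, where $|w_0|=1/\rho$. This gives
\begin{align*}
 |H'(z_0)|&\le\frac{K\rho}{(\rho-1)(\rho+1)}\Bigl(1-\frac{|H(z_0)|^2}{K^2}\Bigr),\\
 |H''(z_0)|&\le\frac{2K\rho}{(\rho-1)^2(\rho+1)}\Bigl(1-\frac{|H(z_0)|^2}{K^2}\Bigr).
\end{align*}
The plain triangle inequality on $F''(\theta)=-z_0H'(z_0)-z_0^2H''(z_0)$ then finishes, since
\[
 \frac{\rho}{(\rho-1)(\rho+1)}+\frac{2\rho}{(\rho-1)^2(\rho+1)}=\frac{\rho}{(\rho-1)^2}.
\]

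You can get the same two estimates with tools you already use. In Step~2, replace the translation by the automorphism $\chi(w)=(w-w_0)/(1-\overline{w_0}w)$ of $\D$ and write $f=\Phi\circ\chi$. The chain rule gives $f'(w_0)=\Phi'(0)/(1-|w_0|^2)$ and $f''(w_0)=\bigl(\Phi''(0)+2\overline{w_0}\Phi'(0)\bigr)/(1-|w_0|^2)^2$. With $b=f(w_0)$, the bounds $|\Phi'(0)|\le1-|b|^2$ (Schwarz--Pick) and $|\Phi''(0)|\le2(1-|b|^2)$ (Wiener's inequality) yield exactly the $m=1,2$ cases above. The M\"obius map in the range from Step~1 is then unnecessary.
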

\begin{proof}
Lemma~\ref{lem:curvature} follows by applying Ruscheweyh's inequality, as stated in \citet[Theorem~1]{AndersonRovnyak2006}, to the first and second derivatives. The proof is given in Appendix~\ref{app:proofs}.
\end{proof}

For $0<\varepsilon<K$, we apply the lemma on an interval where $|F(\theta)|\ge\varepsilon$. The factor $1-|F(\theta)|^2/K^2$ is then at most $1-(\varepsilon/K)^2<1$. The following theorem combines this estimate with the prior bound and a separate treatment of the noiseless case using Blaschke factorization.

\begin{theorem}\label{thm:upper}
Under \eqref{eq:prior}--\eqref{eq:gap}, $r(\Tset(0))\le U_\Delta$, where
\begin{equation}\label{eq:uppergap}
 U_\Delta=
 \begin{cases}
 K, & \varepsilon\ge K,\\[3pt]
 \min\left\{K,\varepsilon+
       \dfrac{\rho\Delta^2(K^2-\varepsilon^2)}{8K(\rho-1)^2}\right\},
       & 0<\varepsilon<K,\\[5pt]
 K\left(\dfrac{2\rho}{1+\rho^2}\right)^N, & \varepsilon=0.
 \end{cases}
\end{equation}
In particular, for equally spaced samples, $r(\Tset(0))\le U_N$, where
\begin{equation}\label{eq:upperN}
 U_N=
 \begin{cases}
 K, & \varepsilon\ge K,\\[3pt]
 \min\left\{K,\varepsilon+
       \dfrac{\pi^2\rho(K^2-\varepsilon^2)}{2K(\rho-1)^2N^2}\right\},
       & 0<\varepsilon<K,\\[5pt]
 \dfrac{2K\rho^N}{1+\rho^{2N}}, & \varepsilon=0.
 \end{cases}
\end{equation}
\end{theorem}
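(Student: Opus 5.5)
The proof splits according to the three cases in \eqref{eq:uppergap}. The case $\varepsilon\ge K$ is immediate from \eqref{eq:trivial}, since every $H\in\Sclass$ has $\norm{H}_\infty\le K$. The bound $K$ inside the minimum for $0<\varepsilon<K$ has the same origin. The real work is the $\Delta^2$ term when $0<\varepsilon<K$, and the noiseless case.

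For $0<\varepsilon<K$, fix $H\in\Tset(0)$ and let $F(\theta)=H(e^{j\theta})$. Since $\rho>1$, $F$ is smooth, so $|F|$ attains its maximum $M$ at some $\theta^*$. Assume $M>\varepsilon$, since otherwise there is nothing to prove. Write $F(\theta^*)=Me^{j\phi}$ and set $g(\theta)=\Re\bigl(e^{-j\phi}F(\theta)\bigr)$. Then $g\le|F|\le M$ and $g(\theta^*)=M$, so $g'(\theta^*)=0$. The point $\theta^*$ lies in some gap $[\theta_n,\theta_{n+1}]$ of length at most $\Delta$, so one endpoint, call it $\theta_m$, is within distance $\Delta/2$ of $\theta^*$. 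Taylor's theorem with remainder then gives
\[
 M-\varepsilon\le g(\theta^*)-g(\theta_m)\le \tfrac12\,\sup|g''|\,(\Delta/2)^2 .
\]
Here $g(\theta_m)\le|F(\theta_m)|\le\varepsilon$ is used.

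To obtain the factor $(K^2-\varepsilon^2)$, the supremum of $|g''|\le|F''|$ must be taken only over a region where $|F|\ge\varepsilon$, so that Lemma~\ref{lem:curvature} gives $|F''|\le \frac{K\rho}{(\rho-1)^2}(1-\varepsilon^2/K^2)$. This is the main obstacle, because $|F|$ might dip below $\varepsilon$ between $\theta_m$ and $\theta^*$. I would handle it by replacing $\theta_m$ with the point $\tilde\theta$ closest to $\theta^*$ on the segment toward $\theta_m$ at which $|F(\tilde\theta)|=\varepsilon$. Such a point exists by continuity, since $|F(\theta^*)|>\varepsilon\ge|F(\theta_m)|$. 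On $[\tilde\theta,\theta^*]$ we have $|F|\ge\varepsilon$, and $|\tilde\theta-\theta^*|\le\Delta/2$. The same Taylor estimate then gives
\[
 M\le \varepsilon+\tfrac18\Delta^2\,\frac{\rho(K^2-\varepsilon^2)}{K(\rho-1)^2}.
\]
This is exactly \eqref{eq:uppergap}. Substituting $\Delta=2\pi/N$ gives \eqref{eq:upperN}.

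For $\varepsilon=0$, $H$ vanishes at the $N$ points $e^{j\theta_n}$, which lie inside $\D_\rho$. Rescale by setting $\tilde H(w)=H(\rho w)/K$, which maps $\D$ into $\overline{\D}$ and vanishes at $a_n=e^{j\theta_n}/\rho$. Factor $\tilde H=B\,G$, where $B(w)=\prod_n \frac{w-a_n}{1-\overline{a_n}w}$ is the Blaschke product and $|G|\le1$ by the maximum principle. Evaluating at $w=e^{j\theta}/\rho$, each factor satisfies
\[
 \left|\frac{w-a_n}{1-\overline{a_n}w}\right|=\frac{|e^{j\theta}-e^{j\theta_n}|}{|\rho-\rho^{-1}e^{j(\theta-\theta_n)}|}\le\frac{2\rho}{1+\rho^2}.
\]
The inequality holds because the pseudo-hyperbolic distance between two points of radius $1/\rho$ is maximized when they are antipodal. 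This yields $K\bigl(2\rho/(1+\rho^2)\bigr)^N$.

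For equally spaced samples, I would use the product directly. With $\alpha_n=\theta-\theta_n$, we have $\prod_n(e^{j\theta}-e^{j\theta_n})=e^{jN\theta}-e^{jN\theta_0}$ and $\prod_n(\rho^2-e^{j\alpha_n})=\rho^{2N}-e^{jN(\theta-\theta_0)}$. Hence $|B|=\rho^N|1-e^{j\psi}|/|\rho^{2N}-e^{j\psi}|$ with $\psi=N(\theta-\theta_0)$. This quantity is maximized at $\psi=\pi$, which gives $2\rho^N/(1+\rho^{2N})$.
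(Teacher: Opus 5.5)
Your proposal is correct and follows the paper's proof essentially step for step: the trivial case, the rotated real part $g$ with $g'(\theta^*)=0$, a crossing point within $\Delta/2$ on which Lemma~\ref{lem:curvature} applies with the factor $1-\varepsilon^2/K^2$, and a Blaschke factorization with per-factor bound $2\rho/(1+\rho^2)$ plus an explicit product for the uniform grid. The differences are only cosmetic. You take the first crossing of $|F|$ with $\varepsilon$, whereas the paper uses the first crossing of $g$, which comes no later since $g\le|F|$. You also rescale to the unit disk rather than iterating Schwarz's lemma on $\D_\rho$, and you handle an arbitrary grid offset directly rather than rotating to the roots of $w^N+1$.
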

\begin{proof}
Case 1: $\varepsilon\ge K$. Every $H\in\Sclass$ satisfies the measurement constraints for $y=0$, so $\Tset(0)=\Sclass$. The constant system $H(z)=K$ attains the upper bound, giving $r(\Tset(0))=K$.

Case 2: $0<\varepsilon<K$. Let $H\in\Tset(0)$ and $R=\norm{H}_\infty$. If $R\le\varepsilon$, the result follows immediately. Otherwise, choose a frequency $\theta_*$ at which $|F|$ attains its maximum and write $F(\theta_*)=H(e^{j\theta_*})=R e^{j\alpha}$. The distance to the nearest sampled frequency, measured along the unit circle, is $d\le\Delta/2$. Choose $\sigma\in\{-1,1\}$ so that $\theta_*+\sigma d$ is this sampled frequency, modulo $2\pi$, and define
\[
 g(s)=\Re\{e^{-j\alpha}F(\theta_*+\sigma s)\},
 \quad 0\le s\le d,\quad \sigma\in\{-1,1\}.
\]
This real-valued function can be defined by the same formula for all $s\in\R$. It satisfies $g(s)\le|F(\theta_*+\sigma s)|\le R$, with equality at $s=0$. Therefore $g'(0)=0$ and $g''(0)\le0$. At the sampled frequency, $g(d)\le\varepsilon$. Since $g(0)=R>\varepsilon$ and $g(d)\le\varepsilon$, continuity guarantees a smallest $\tau\in(0,d]$ such that $g(\tau)=\varepsilon$. For $0\le s\le\tau$, we have $g(s)\ge\varepsilon$.

Differentiating twice gives $g''(s)=\Re\{e^{-j\alpha}F''(\theta_*+\sigma s)\}$. Applying Lemma~\ref{lem:curvature} to $H$ at $\theta_*+\sigma s$ and using $|F(\theta_*+\sigma s)|\ge g(s)\ge\varepsilon$ yields
\begin{equation}\label{eq:crossingcurv}
 g''(s)\ge-|F''(\theta_*+\sigma s)|
 \ge-\frac{K\rho}{(\rho-1)^2}\left(1-\frac{\varepsilon^2}{K^2}\right).
\end{equation}
By Taylor's theorem, there exists $\xi\in(0,\tau)$ such that, using $g(0)=R$, $g'(0)=0$, and \eqref{eq:crossingcurv},
\begin{align*}
 \varepsilon=g(\tau)
 &=g(0)+\tau g'(0)+\frac{\tau^2}{2}g''(\xi)\\
 &\ge R-\frac{K\rho\tau^2}{2(\rho-1)^2}
                    \left(1-\frac{\varepsilon^2}{K^2}\right).
\end{align*}
Rearranging and using $\tau\le d\le\Delta/2$ gives
\[
 R\le\varepsilon+\frac{\rho\Delta^2(K^2-\varepsilon^2)}{8K(\rho-1)^2}.
\]
Combining this with the prior bound $R\le K$ yields
\[
 \norm{H}_\infty\le\min\left\{K,\varepsilon+
 \frac{\rho\Delta^2(K^2-\varepsilon^2)}{8K(\rho-1)^2}\right\}.
\]
This holds for every $H\in\Tset(0)$. Taking the supremum over this set gives \eqref{eq:uppergap}, since $r(\Tset(0))=\sup_{H\in\Tset(0)}\norm{H}_\infty$. Substituting $\Delta=2\pi/N$ gives the second branch of \eqref{eq:upperN}.

Case 3: $\varepsilon=0$. Every $H\in\Tset(0)$ vanishes at $z_n=e^{j\theta_n}$. Define the Blaschke product
\begin{equation}\label{eq:blaschke-product}
 B_\Theta(z)=\prod_{n=0}^{N-1}\frac{\rho(z-z_n)}{\rho^2-\overline{z_n}z}.
\end{equation}
To explain the factorization, write $f_0=H/K$ and denote the $n$th factor in \eqref{eq:blaschke-product} by $b_n$. Then $|f_0|\le1$ on $\D_\rho$ and $f_0(z_0)=0$. The map $b_0$ sends $\D_\rho$ onto the unit disk and $z_0$ to zero. Its inverse is
\[
 b_0^{-1}(w)=\frac{\rho^2w+\rho z_0}{\rho+\overline{z_0}w},\qquad w\in\D.
\]
Schwarz's lemma states that if $f$ is analytic on $\D$, $|f|\le1$, and $f(0)=0$, then $|f(w)|\le|w|$ for $w\in\D$. The function $f=f_0\circ b_0^{-1}$ satisfies these conditions. Taking $w=b_0(z)$ gives
\[
 |f_0(z)|\le |b_0(z)|,\qquad z\in\D_\rho.
\]
Consequently, $f_0/b_0$ is analytic on $\D_\rho$ and bounded by one. Repeating this process for the remaining zeros gives an analytic function $Q$ on $\D_\rho$ satisfying
\[
 Q=\frac{H}{K b_0\cdots b_{N-1}},\qquad
 \norm{Q}_{\infty,\rho}\le1.
\]
Thus $H=KB_\Theta Q$ and $\norm{H}_\infty\le K\norm{B_\Theta}_\infty$. Each factor satisfies, on $|z|=1$,
\[
 \left|\frac{\rho(z-z_n)}{\rho^2-\overline{z_n}z}\right|
 \le\frac{2\rho}{1+\rho^2},
\]
with equality at $z=-z_n$. Multiplying these inequalities proves the noiseless branch of \eqref{eq:uppergap}.

For equally spaced samples, we can compute the product and its norm explicitly. Rotating the grid by $\phi$ gives
\[
 B_{\Theta+\phi}(e^{j\phi}z)=e^{jN\phi}B_\Theta(z),
\]
so the norm on the unit circle is unchanged. We may therefore take $\theta_n=(2n+1)\pi/N$. The sampling points $z_n$ are the roots of $w^N+1$, so
\begin{equation}\label{eq:uniform-grid-polynomial}
 \prod_{n=0}^{N-1}(w-z_n)=w^N+1,\qquad w\in\mathbb C.
\end{equation}
Taking complex conjugates, substituting $\overline w=\rho^2/z$ for $z\ne0$, and multiplying both sides by $z^N$ gives
\begin{equation}\label{eq:uniform-grid-denominator}
 \prod_{n=0}^{N-1}(\rho^2-\overline{z_n}z)=\rho^{2N}+z^N.
\end{equation}
This identity also holds at $z=0$ as well. In \eqref{eq:blaschke-product}, we use \eqref{eq:uniform-grid-polynomial} with $w=z$ to evaluate the numerator and \eqref{eq:uniform-grid-denominator} to evaluate the denominator. This gives
\begin{equation}\label{eq:blaschke-uniform}
 B_\Theta(z)=\frac{\rho^N(z^N+1)}{\rho^{2N}+z^N}.
\end{equation}
Taking the $H_\infty$ norm of \eqref{eq:blaschke-uniform} on the unit circle gives
\[
 \norm{B_\Theta}_\infty=\frac{2\rho^N}{1+\rho^{2N}},
\]
with the maximum attained at $z^N=1$.
Finally, $\norm{H}_\infty\le K\norm{B_\Theta}_\infty$ gives the noiseless branch of \eqref{eq:upperN}.

\end{proof}

\begin{remark}
The noiseless bounds follow from classical Blaschke factorization. This construction also appears in the identification lower bound of \citet[Theorem~5.1]{ChenNettFan1995}. For equally spaced samples and fixed $0<\varepsilon<K$ and $\rho>1$, \citet[Theorem~4.4]{ChenGu2002} gives the upper bound $\varepsilon+K\rho\pi/((\rho^2-1)N)$ for $r(\Tset(0))$, which converges to $\varepsilon$ at rate $O(N^{-1})$. The diameter bound in \citet[Theorem~5.3]{ChenNettFan1995} gives a radius bound after division by two, since \eqref{eq:diameter} gives $\mathcal D(I)=2r(\Tset(0))$. This radius bound has an $O(N^{-2})$ term but converges to $\varepsilon+2\varepsilon K^2/(K^2+\varepsilon^2)$, approximately $3\varepsilon$ when $\varepsilon/K$ is small. The upper bound on $r(\Tset(0))$ in Theorem~\ref{thm:upper} converges all the way to the noise level $\varepsilon$ at rate $O(N^{-2})$ improving upon previous bounds.
\end{remark}

\subsection{Lower bound on $r(\Tset(0))$}
To show that the rate in Theorem~\ref{thm:upper} is sharp for $0<\varepsilon<K$, we construct an FIR system with a peak at $\theta=0$, halfway between samples. We choose its impulse-response coefficients to satisfy the prior and scale its largest sampled magnitude to $\varepsilon$. Since the response magnitude decreases quadratically near the peak, the nearest samples, at distance $\pi/N$, miss the maximum by a term of order $N^{-2}$. This gives a lower bound on $r(\Tset(0))$ with the same rate.

\begin{theorem}\label{thm:lower}
Under \eqref{eq:prior}--\eqref{eq:gap} with $\varepsilon\ge0$ and $N\ge2$, let the sampling frequencies be
\begin{equation}\label{eq:halfgrid}
 \Theta=\left\{\frac{(2n+1)\pi}{N}:n=0,\ldots,N-1\right\}.
\end{equation}
Then $r(\Tset(0))\ge L_N$, where
\begin{equation}\label{eq:FIRlower}
 L_N=
 \begin{cases}
 K, & \varepsilon\ge K,\\[4pt]
 \displaystyle\frac{\varepsilon}{\sqrt{1-
 \frac{4\varepsilon\rho(K-\varepsilon)}{[K+\varepsilon(\rho-1)]^2}
 \sin^2\!\frac{\pi}{2N}}}, & 0<\varepsilon<K,\\[5pt]
 \dfrac{2K\rho^N}{1+\rho^{2N}}, & \varepsilon=0.
 \end{cases}
\end{equation}
For $0<\varepsilon<K$, as $N\to\infty$,
\begin{equation}\label{eq:lowerasymptotic}
 L_N=\varepsilon+
 \frac{\pi^2\varepsilon^2\rho(K-\varepsilon)}{2[K+\varepsilon(\rho-1)]^2N^2}
 +O(N^{-4}).
\end{equation}
\end{theorem}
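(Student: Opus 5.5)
The plan is to exhibit, in each regime, an explicit system in $\Tset(0)$ whose $H_\infty$ norm equals $L_N$. Since Lemma~\ref{lem:geometry} gives $r(\Tset(0))=\sup_{H\in\Tset(0)}\norm{H}_\infty$, this proves $r(\Tset(0))\ge L_N$.

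\emph{Case $\varepsilon\ge K$.} By \eqref{eq:geometrycases}, $r(\Tset(0))=K$, so there is nothing to prove.

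\emph{Case $\varepsilon=0$.} I would take $H=KB_\Theta$, with $B_\Theta$ given by \eqref{eq:blaschke-uniform}.
\begin{itemize}
\item The coefficients of $\rho^N(z^N+1)/(\rho^{2N}+z^N)$ are real, so $H$ has a real impulse response.
\item Each Blaschke factor maps $\D_\rho$ into $\D$, so $\norm{H}_{\infty,\rho}\le K$ and $H\in\Sclass$.
\item $H$ vanishes at every $e^{j\theta_n}$, since these are the roots of $z^N+1$.
\item The proof of Theorem~\ref{thm:upper} already gives $\norm{H}_\infty=2K\rho^N/(1+\rho^{2N})$.
\end{itemize}

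\emph{Case $0<\varepsilon<K$.} I would use a first-order FIR system $H_0(z)=a+bz$ with $a,b>0$. Then
\[
 |H_0(e^{j\theta})|^2=(a+b)^2-4ab\sin^2\tfrac{\theta}{2}.
\]
This is maximized at $\theta=0$, with value $a+b$. It decreases in $|\theta|$ on $[0,\pi]$, so the largest sampled magnitude occurs at the nearest grid points $\pm\pi/N$. Matching the target expression suggests the choice
\[
 a=\varepsilon,\qquad b=\frac{K-\varepsilon}{\rho}.
\]
This choice has three useful properties:
\begin{itemize}
\item $\norm{H_0}_{\infty,\rho}=a+b\rho=K$;
\item $a+b=[K+\varepsilon(\rho-1)]/\rho$;
\item $4ab/(a+b)^2=4\varepsilon\rho(K-\varepsilon)/[K+\varepsilon(\rho-1)]^2$.
\end{itemize}
Let $m=|H_0(e^{j\pi/N})|$ and set $H=(\varepsilon/m)H_0$. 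Then $H$ has real coefficients, its largest sampled magnitude is exactly $\varepsilon$, and
\[
 \norm{H}_\infty=\frac{\varepsilon(a+b)}{m}
 =\frac{\varepsilon}{\sqrt{1-\frac{4ab}{(a+b)^2}\sin^2\frac{\pi}{2N}}}=L_N.
\]
It remains to check the prior, $\norm{H}_{\infty,\rho}=\varepsilon K/m\le K$, which is equivalent to $m\ge\varepsilon$. This is where $N\ge2$ enters: $\cos(\pi/N)\ge0$, so
\[
 m^2=a^2+b^2+2ab\cos\tfrac{\pi}{N}\ge a^2=\varepsilon^2.
\]
This verification is the one delicate point. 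The particular assignment of $a=\varepsilon$ to the constant term, rather than to the $z$ coefficient, is what makes it work.

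\emph{Asymptotics.} For \eqref{eq:lowerasymptotic} I would write $c=4\varepsilon\rho(K-\varepsilon)/[K+\varepsilon(\rho-1)]^2$ and note that $c\le1$, since it equals $4ab/(a+b)^2$. Expanding
\[
 (1-c\sin^2x)^{-1/2}=1+\tfrac{c}{2}x^2+O(x^4),\qquad x=\frac{\pi}{2N},
\]
and multiplying by $\varepsilon$ gives the $N^{-2}$ term $\varepsilon c\pi^2/(8N^2)$. This coincides with the stated coefficient.
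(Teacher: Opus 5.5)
Your proposal is correct and is essentially the paper's proof. Your $a+bz$ with $a=\varepsilon$ and $b=(K-\varepsilon)/\rho$, normalized by $m=\varepsilon D_N$, is exactly the paper's $H_N=\varepsilon(1+tz)/D_N$ with $t=(K-\varepsilon)/(\varepsilon\rho)$; your prior check $m\ge\varepsilon$ is the paper's $D_N\ge1$, and the $\varepsilon=0$ case uses the same $KB_\Theta$. The only cosmetic difference is that you obtain the $N^{-2}$ coefficient by expanding $(1-c\sin^2x)^{-1/2}$ directly, whereas the paper rationalizes $L_N-\varepsilon$.
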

\begin{proof}
Case 1: $\varepsilon\ge K$. The constant system $H(z)=K$ is admissible and attains $r(\Tset(0))=K$, proving the first branch of \eqref{eq:FIRlower}.

Case 2: $0<\varepsilon<K$. Set
\begin{equation}\label{eq:tchoice}
 t=\frac{K-\varepsilon}{\varepsilon\rho}>0,
\end{equation}
and construct the FIR system
\begin{equation}\label{eq:FIR}
 D_N=\sqrt{1+t^2+2t\cos(\pi/N)},\qquad
 H_N(z)=\frac{\varepsilon(1+tz)}{D_N}.
\end{equation}
On $|z|=\rho$, the values of $1+tz$ trace a circle centered at $1$ with radius $t\rho$, so the maximum magnitude is $1+t\rho$, attained at $z=\rho$. Thus,
\begin{equation}\label{eq:FIRnorms}
 \norm{H_N}_{\infty,\rho}=\frac{\varepsilon(1+t\rho)}{D_N}=\frac{K}{D_N}.
\end{equation}
The second equality follows by substituting $t$ from \eqref{eq:tchoice}. For $N\ge2$, $\cos(\pi/N)\ge0$, so $D_N\ge1$ and $\norm{H_N}_{\infty,\rho}\le K$, establishing the prior constraint in \eqref{eq:prior}.

Since $|1+te^{j\theta}|^2=1+t^2+2t\cos\theta$, the largest sampled magnitude occurs at the frequencies nearest to $0$. On the grid \eqref{eq:halfgrid}, these are at angular distance $\pi/N$. Thus the normalization by $D_N$ gives
\[
 \max_n|H_N(e^{j\theta_n})|
 =\frac{\varepsilon\sqrt{1+t^2+2t\cos(\pi/N)}}{D_N}
 =\varepsilon.
\]
The system therefore is consistent with the measurement data $y=0$, so $H_N\in\Tset(0)$ and $r(\Tset(0))\ge\norm{H_N}_\infty$. Setting $\rho=1$ in \eqref{eq:FIRnorms}, with $t$ fixed, gives $\norm{H_N}_\infty=\varepsilon(1+t)/D_N$. Since $D_N^2=(1+t)^2-4t\sin^2(\pi/(2N))$, this becomes
\[
 \norm{H_N}_\infty
 =\frac{\varepsilon}{\sqrt{1-\frac{4t}{(1+t)^2}\sin^2(\pi/(2N))}}.
\]
Substituting $t$ from \eqref{eq:tchoice} yields the lower bound in the second branch of \eqref{eq:FIRlower}. To obtain the convergence rate, we have
\[
 L_N-\varepsilon
 =\frac{2\varepsilon t\bigl(1-\cos(\pi/N)\bigr)}
 {D_N(1+t+D_N)}.
\]
As $N\to\infty$, the numerator satisfies $2\varepsilon t(1-\cos(\pi/N))=\varepsilon t\pi^2/N^2+O(N^{-4})$.
For the denominator, $D_N^2=(1+t)^2-t\pi^2/N^2+O(N^{-4})$ and $D_N=1+t+O(N^{-2})$. Consequently,
\[
 \begin{aligned}
 D_N(1+t+D_N)&=2(1+t)^2+O(N^{-2}),\\
 \frac{1}{D_N(1+t+D_N)}&=\frac{1}{2(1+t)^2}+O(N^{-2}).
 \end{aligned}
\]
Multiplying the numerator by this reciprocal gives
\[
 L_N-\varepsilon
 =\frac{\varepsilon t\pi^2}{2(1+t)^2N^2}+O(N^{-4}).
\]
Finally, substituting \eqref{eq:tchoice} gives \eqref{eq:lowerasymptotic}.

Case 3: $\varepsilon=0$. Choose $H_N=KB_\Theta$, with $B_\Theta$ given by \eqref{eq:blaschke-uniform}. This system has real coefficients, satisfies $\norm{H_N}_{\infty,\rho}=K$, and vanishes at every sampling frequency in \eqref{eq:halfgrid}. Hence $H_N\in\Tset(0)$ and
\[
 r(\Tset(0))\ge\norm{H_N}_\infty
 =\frac{2K\rho^N}{1+\rho^{2N}},
\]
which proves the noiseless branch of \eqref{eq:FIRlower}.

\end{proof}

\begin{remark}
For $\varepsilon=0$, Theorems~\ref{thm:upper} and~\ref{thm:lower} give the exact value on the grid \eqref{eq:halfgrid}:
\begin{equation}\label{eq:noiseless}
 r(\Tset(0))=\frac{2K\rho^N}{1+\rho^{2N}}
 =2K\rho^{-N}+O(\rho^{-3N}).
\end{equation}
For positive noise, the Blaschke-product construction in the proof of \citet[Theorem~5.1]{ChenNettFan1995} gives, for the grid \eqref{eq:halfgrid} and $0<\varepsilon<K$,
\begin{equation}\label{eq:blaschkelower}
 r(\Tset(0))\ge
 \frac{K\varepsilon(\rho^{2N}+1)+2K^2\rho^N}
 {K(\rho^{2N}+1)+2\varepsilon\rho^N}.
\end{equation}
The earlier lower bound approaches $\varepsilon$ exponentially fast. Our lower bound in \eqref{eq:lowerasymptotic} approaches $\varepsilon$ at rate $\Theta(N^{-2})$, so it is larger for all sufficiently large $N$. Together with the upper bound in Theorem~\ref{thm:upper}, this establishes that the quadratic convergence rate is tight.
\end{remark}

\subsection{Global identification error}
We now combine the upper and lower bounds on $r(\Tset(0))$ to bound the global worst-case identification error. The following corollary applies to every consistent identification algorithm and includes the noiseless case and both positive-noise regimes.
\begin{corollary}\label{cor:error}
Under \eqref{eq:prior}--\eqref{eq:gap} with $\varepsilon\ge0$, $N\ge2$, and the sampling frequencies in \eqref{eq:halfgrid}, every consistent identification algorithm satisfies
\begin{equation}\label{eq:estimatorbound}
 L_N\le E(\mathcal A)\le2U_N,
\end{equation}
where $L_N$ and $U_N$ are defined in \eqref{eq:FIRlower} and \eqref{eq:upperN}, respectively.
\end{corollary}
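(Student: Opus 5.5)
The plan is to chain the sandwich inequality of Lemma~\ref{lem:geometry} with the two bounds on $r(\Tset(0))$ from Theorems~\ref{thm:upper} and~\ref{thm:lower}. First, I would check that the setting of the corollary satisfies Assumptions~\ref{ass:consistency}--\ref{ass:convexity}, so that Lemma~\ref{lem:geometry} applies. The true system $H_0$ lies in $\Tset(y)$ because $\eta_n\in\NoiseSet_\varepsilon$. The prior set $\Sclass=\mathcal H_{\infty,\rho}^K$ is a norm ball, and $\NoiseSet_\varepsilon$ is a closed disk; both are convex and symmetric about zero. Lemma~\ref{lem:geometry} together with \eqref{eq:diameter} then gives
\[
 r(\Tset(0))\le E(\mathcal A)\le 2\,r(\Tset(0))
\]
for every consistent algorithm, using $r(\Tset(0))=\sup_{H\in\Tset(0)}\norm{H}_\infty$.

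Next I would bound each side using the grid \eqref{eq:halfgrid}, which is an equally spaced grid with $\Delta=2\pi/N$.
\begin{itemize}
\item For the lower side, Theorem~\ref{thm:lower} applies directly to this grid with $N\ge2$ and $\varepsilon\ge0$. It gives $r(\Tset(0))\ge L_N$, hence $E(\mathcal A)\ge L_N$.
\item For the upper side, Theorem~\ref{thm:upper} in its equally spaced form \eqref{eq:upperN} gives $r(\Tset(0))\le U_N$, hence $E(\mathcal A)\le 2U_N$.
\end{itemize}
I would walk through the three regimes $\varepsilon\ge K$, $0<\varepsilon<K$ and $\varepsilon=0$ to confirm that the branches of $L_N$ and $U_N$ are the ones invoked. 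The noiseless branch of \eqref{eq:upperN} was derived on exactly this rotated grid $\theta_n=(2n+1)\pi/N$, so it matches.

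The only point needing care is that Lemma~\ref{lem:geometry} is quoted from the literature with $\varepsilon$ implicitly positive. For $\varepsilon=0$ the noise set is $\{0\}$, which is still convex and symmetric, so the same argument goes through. I expect no genuine obstacle: the corollary is a direct composition of earlier results.
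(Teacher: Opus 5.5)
Your proposal is correct and follows the paper's proof: it combines $L_N\le r(\Tset(0))\le U_N$ from Theorems~\ref{thm:upper} and~\ref{thm:lower} with the sandwich inequality of Lemma~\ref{lem:geometry}. Your explicit checks of Assumptions~\ref{ass:consistency}--\ref{ass:convexity}, of the equally spaced branch, and of the $\varepsilon=0$ case are extra care that the paper leaves implicit.
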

\begin{proof}
Theorems~\ref{thm:upper} and~\ref{thm:lower} give $L_N\le r(\Tset(0))\le U_N$. Lemma~\ref{lem:geometry} then completes the proof.
\end{proof}

\begin{remark}
The bounds show how noise changes the convergence of the identification error. When $\varepsilon=0$, $L_N=U_N=2K\rho^N/(1+\rho^{2N})$, so $E(\mathcal A)=\Theta(\rho^{-N})$: the worst-case error decreases exponentially to zero. For $0<\varepsilon<K$, the lower and upper bounds approach $\varepsilon$ and $2\varepsilon$, respectively, with
\[
 \varepsilon+\Theta(N^{-2})\le E(\mathcal A)
 \le2\varepsilon+\Theta(N^{-2}),\qquad N\to\infty.
\]
Thus, increasing the number of samples reduces the additional uncertainty due to finite frequency sampling, but cannot eliminate the worst-case error due to bounded noise. When $\varepsilon\ge K$, the bounds reduce to $K\le E(\mathcal A)\le2K$. In this regime, every system in the prior class is consistent with zero measurements, so increasing the number of samples cannot improve these worst-case bounds.
\end{remark}

\section{Stochastic IDENTIFICATION Bounds}\label{sec:stochastic}
We now ask whether bounded stochastic noise allows high probability diameter bounds, as in Definition~\ref{def:stochasticdiameter}, to fall below the noise level and converge to zero. We first recall an upper bound that holds for every admissible noise distribution.

Under Assumptions~\ref{ass:consistency}--\ref{ass:convexity}, every consistent algorithm satisfies, almost surely,
\begin{equation}\label{eq:stochasticzeroupper}
 \norm{\mathcal A(y)-H_0}_\infty
 \le e(\mathcal A,y)\le d(\Tset(y))\le2r(\Tset(0)),
\end{equation}
by \eqref{eq:localerror} and Lemma~\ref{lem:geometry}. Thus, for equally spaced samples, $2U_N$ is a valid high-probability diameter bound for every admissible noise distribution. When $0<\varepsilon<K$, it approaches $2\varepsilon$ at rate $O(N^{-2})$.
This quadratic convergence to $2\varepsilon$ raises the question: can a high-probability diameter bound converge to zero at the same rate? The following theorem shows that this is not possible in general. For uniform disk noise, every such bound must be at least of order $N^{-1}$, ruling out $O(N^{-2})$ convergence to zero.

\begin{theorem}\label{thm:stoch-lower}
Under \eqref{eq:prior}--\eqref{eq:data1}, let $K,\varepsilon>0$, $N\ge2$, and $0<\delta<1$, and use the sampling grid \eqref{eq:halfgrid}. Assume the noises at sampled frequencies in $[0,\pi]$ are independent. For $0<\theta_n<\pi$, let $\eta_n$ be uniform on the disk $\NoiseSet_\varepsilon$. At any sampled endpoint $\theta_n\in\{0,\pi\}$, let $\eta_n$ be uniform on the real interval $[-\varepsilon,\varepsilon]$. At the remaining frequencies, set $\eta_k=\overline{\eta_n}$ whenever $\theta_k=2\pi-\theta_n$. Let $p$ count the sampled frequencies in $(0,\pi)$ and $r$ those in $\{0,\pi\}$, so the full grid has $N=2p+r$ samples. Every high-probability diameter bound in Definition~\ref{def:stochasticdiameter} satisfies
\begin{equation}\label{eq:uniform_diameter}
 b_N(\delta)\ge2\min\{K,\varepsilon x_N(\delta)\},
\end{equation}
where $x_N(\delta)\in(0,1)$ is the unique solution of
\begin{equation}\label{eq:lens_threshold}
 \ell(x_N)^p(1-x_N)^r=\delta,
\end{equation}
with
\begin{equation}\label{eq:lens_fraction}
 \ell(x)=\frac2\pi\left(\arccos x-x\sqrt{1-x^2}\right),
 \qquad 0\le x\le1.
\end{equation}
For fixed $K,\varepsilon,\delta$, the right-hand side of \eqref{eq:uniform_diameter} satisfies
\begin{equation}\label{eq:uniform_diameter_rate}
 2\min\{K,\varepsilon x_N(\delta)\}
 =\frac{\pi\varepsilon\ln(1/\delta)}{N}+O(N^{-2}).
\end{equation}
Thus no high-probability diameter bound can be $o(N^{-1})$ under this noise model.
\end{theorem}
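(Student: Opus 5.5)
Take $H_0=0$ and compare it with the constant systems $\pm c$. Set $c=\min\{K,\varepsilon x\}$ for a fixed $x\in(0,1)$. The constants $c$ and $-c$ have real coefficients and $H_\infty$-norm at most $K$ on $\D_\rho$, so they lie in $\Sclass$. With $H_0=0$ we have $y_n=\eta_n$. Then $c\in\Tset(y)$ iff $|\eta_n-c|\le\varepsilon$ for all $n$, and $-c\in\Tset(y)$ iff $|\eta_n+c|\le\varepsilon$ for all $n$. On the event that both hold, $d(\Tset(y))\ge\norm{c-(-c)}_\infty=2c$. Suppose $b_N(\delta)<2c$. Then, by Definition~\ref{def:stochasticdiameter}, the probability of this event is at most $\delta$. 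So it suffices to show that the event has probability strictly greater than $\delta$ whenever $x<x_N(\delta)$. Letting $x\uparrow x_N(\delta)$ then gives \eqref{eq:uniform_diameter}.

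The event factorizes. Conjugate pairs give the same constraint, because the disks centred at $\pm c$ are symmetric under conjugation. Independence of the noise on $[0,\pi]$ then gives a product of $p$ disk probabilities and $r$ interval probabilities. For a uniform point in the disk of radius $\varepsilon$, the probability of also lying in both disks of radius $\varepsilon$ centred at $\pm c$ is the lens area divided by $\pi\varepsilon^2$. Two disks of radius $\varepsilon$ whose centres are $c$ apart overlap in area $2\varepsilon^2\arccos(c/2\varepsilon)-(c/2)\sqrt{4\varepsilon^2-c^2}$. The triple intersection, though, is the intersection of the two shifted disks. Its boundary points satisfy $|w\mp c|=\varepsilon$, and the central disk contains it automatically. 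The shifted disks have centre distance $2c$, so with $x=c/\varepsilon$ the lens fraction is $\frac{2}{\pi}(\arccos x-x\sqrt{1-x^2})=\ell(x)$. On the real interval the corresponding set is $[c-\varepsilon,\varepsilon-c]$, which has probability $1-x$. The event therefore has probability $\ell(x)^p(1-x)^r$.

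This expression is continuous and strictly decreasing in $x\in[0,1]$, from $1$ down to $0$, since $\ell'(x)=-\frac{4}{\pi}\sqrt{1-x^2}<0$. Hence \eqref{eq:lens_threshold} has a unique root. For $x<x_N$ the probability exceeds $\delta$. When $\varepsilon x_N\ge K$, the choice $c=K$ already gives the bound $2K$.

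For the rate \eqref{eq:uniform_diameter_rate}, take logarithms: $p\ln\ell(x)+r\ln(1-x)=\ln\delta$. Near $0$ we have $\ell(x)=1-\frac{4}{\pi}x+O(x^3)$, and $r\le2$ is bounded. This gives $x_N\bigl(\frac{4p}{\pi}+r\bigr)(1+O(x_N))=\ln(1/\delta)$, so $x_N=\Theta(1/N)$. Using $p=(N-r)/2$ we get $\frac{4p}{\pi}+r=\frac{2N}{\pi}+O(1)$. Hence $x_N=\frac{\pi\ln(1/\delta)}{2N}+O(N^{-2})$, and for large $N$ the minimum with $K$ is inactive, giving $2\varepsilon x_N=\pi\varepsilon\ln(1/\delta)/N+O(N^{-2})$.

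The main obstacle is to set up the lens-area computation carefully, using distance $2c$ between $\pm c$ rather than $c$. The remaining care goes into keeping the $O(x^2)$ terms of $\ln\ell$ and the bounded $r$ contributions inside the $O(N^{-2})$ remainder.
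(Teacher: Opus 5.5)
Your proposal is correct and follows essentially the same route as the paper: it takes $H_0=0$ and the constants $\pm c$, uses the lens of two radius-$\varepsilon$ disks at centre distance $2c$ to get the product $\ell(c/\varepsilon)^p(1-c/\varepsilon)^r$, uses monotonicity of that product with a limiting argument, and finishes with a logarithmic expansion for the rate. The one step to state explicitly is that $x_N\to0$ (because $\ell(x_0)^p\to0$ for any fixed $x_0>0$) before you use the expansion near $0$; the paper does this, and it is what makes your factor $(1+O(x_N))$ legitimate.
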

\begin{proof}
The proof is given in Appendix~\ref{app:stochastic-lower}.
\end{proof}

The proof of Theorem~\ref{thm:stoch-lower} explains the role of noise near the boundary. When $H_0=0$, the two constant systems $H_\pm=\pm a$, with $0<a\le\min\{K,\varepsilon\}$, both remain consistent whenever every noise sample satisfies $|\eta_n|\le\varepsilon-a$. Excluding either candidate therefore requires a noise sample close to the boundary, with a suitable phase. The function $\ell(a/\varepsilon)$ in \eqref{eq:lens_fraction} measures the probability that a disk sample leaves both candidates consistent. This motivates the boundary probability condition used in the following upper bound. Similar conditions occur in time-domain set-membership identification: \citet[Assumption~5]{MusaviEtAl2024} require positive probability arbitrarily close to the disturbance bounds, while \citet{XuLi2025} study probabilities near the boundary of general convex noise sets. Their results concern finite-dimensional parameter sets. Here we bound the $H_\infty$ diameter of an analytic class of transfer functions.

\subsection{A stochastic upper bound}\label{sec:stochastic_upper}
We now complement the lower bound in Theorem~\ref{thm:stoch-lower} with an upper bound on $d(\Tset(y))$ and study how its convergence depends on the noise distribution. To obtain this bound, we impose the following assumption, using the conjugate completion described in Section~\ref{sec:problem}.

\begin{assumption}\label{ass:noise}
The sampling grid is closed under conjugation. The noises at sampled frequencies in $[0,\pi]$ are independent and satisfy $|\eta_n|\le\varepsilon$ almost surely. Those in $(0,\pi)$ are identically distributed and circularly symmetric, with $\Pr_\eta\{|\eta_n|\ge\varepsilon-s\}>0$ for every $0<s<\varepsilon$. If $\theta_n\in\{0,\pi\}$, then $\eta_n$ is real-valued.
\end{assumption}

Write $\eta_n=r_ne^{j\phi_n}$, where $r_n=|\eta_n|$, and let $s_n=\varepsilon-r_n$ be the distance of the noise sample from its magnitude bound. For a sample with $0<\theta_n<\pi$, define
\begin{equation}\label{eq:q}
 q(s)=\Pr_\eta\{|\eta_n|\ge\varepsilon-s\},\qquad 0\le s\le\varepsilon.
\end{equation}
The same $q(s)$ applies to the conjugate samples. At every sampled frequency other than $0$ and $\pi$, circular symmetry gives, for $0\le s<\varepsilon$ and any phase arc $I$,
\begin{equation}\label{eq:phase}
 \Pr_\eta\{s_n\le s,\ \phi_n\in I\}=\frac{|I|}{2\pi}q(s).
\end{equation}
To express the dependence on the noise distribution, suppose $q(s)\ge c_\beta(s/\varepsilon)^\beta$ for $0<s\le\varepsilon$, with $\beta\ge0$ and $c_\beta\in(0,1]$. Related conditions on probabilities near the noise boundary are considered by \citet{XuLi2025}. Here $q(s)$ is the probability of lying within distance $s$ of the disk boundary. For uniform noise on $\NoiseSet_\varepsilon$, $q(s)=1-(1-s/\varepsilon)^2\ge s/\varepsilon$, so the bound holds with $\beta=1$. Noise of fixed magnitude $\varepsilon$ and uniform phase has $q(s)=1$, so the bound holds with $\beta=0$.

Using these boundary probabilities, the following theorem gives a high probability upper bound on the consistency set diameter, with convergence rates that depend on the noise distribution. Its proof follows the same idea as Theorem~\ref{thm:upper}: we use a second derivative bound to compare the maximum error with its value at a nearby sampled frequency. In the stochastic setting, we also need the noise at that sample to be close to the boundary and have a suitable phase. The probabilities above allow us to find such samples with high probability.
\begin{theorem}\label{teo:upper_stochastic}
Under Assumption~\ref{ass:noise} and \eqref{eq:prior}--\eqref{eq:gap}, let the samples be equally spaced, with $\Delta=2\pi/N$. Let $0<\delta<1$, $\Lambda=\ln(32N/\delta)$, and choose $s\in[0,\varepsilon)$ such that $q(s)>0$ and $N\ge2(3+32\Lambda/q(s))$. Then
\begin{equation}\label{eq:local-cor}
 b_N(\delta)=\min\{2U_N,\,2V_N(s)\}
\end{equation}
is a high-probability diameter bound, where
\begin{equation}\label{eq:VN}
 V_N(s)=\sqrt2\,s+\frac{32\pi^2K\rho}{(\rho-1)^2N^2}
 \left(3+\frac{32\Lambda}{q(s)}\right)^2,
\end{equation}
and $U_N$ is the worst-case bound given in \eqref{eq:upperN}.
In particular:
\begin{enumerate}
\item[(a)] If $q(0)=\Pr_\eta\{|\eta_n|=\varepsilon\}>0$, choosing $s=0$ gives a bound of order $O(N^{-2}\ln^2(N/\delta))$.
\item[(b)] If $q(s)\ge c_\beta(s/\varepsilon)^\beta$ for $0<s\le\varepsilon$, with $\beta>0$ and $c_\beta\in(0,1]$, choosing $s=\varepsilon(\Lambda/(c_\beta N))^{2/(1+2\beta)}$ gives, for all sufficiently large $N$, a bound of order $O((N^{-1}\ln(N/\delta))^{2/(1+2\beta)})$.
\end{enumerate}
\end{theorem}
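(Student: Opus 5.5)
Since $2U_N$ is already a valid bound by \eqref{eq:stochasticzeroupper}, it suffices to show that $d(\Tset(y))\le 2V_N(s)$ with probability at least $1-\delta$. For $G,H\in\Tset(y)$, the function $E=(G-H)/2$ lies in $\Sclass$, because $\Sclass$ is convex and symmetric. Moreover, $E(e^{j\theta_n})=(\eta^H_n-\eta^G_n)/2$ for the implied noises, and both $G$ and $H$ lie within $\varepsilon$ of $y_n$. The key observation is that $H_0\in\Tset(y)$, and the true noise $\eta_n$ lies near the boundary. Write $D=G-H_0$ (and similarly for $H$), so that $|D(e^{j\theta_n})+\eta_n|\le\varepsilon$ at every sample. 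If $|\eta_n|\ge\varepsilon-s$ and $D(e^{j\theta_n})$ has a component along $-\eta_n/|\eta_n|$, that component is forced to be small. Specifically, if the phase of $D(e^{j\theta_n})$ is within angle $\pi/4$ of the direction $-\eta_n$, then $|D(e^{j\theta_n})|\cos(\pi/4)\lesssim s$, giving $|D(e^{j\theta_n})|\le\sqrt2\,s$ up to lower-order terms. This is the source of the $\sqrt2\,s$ term. It therefore suffices to show that $\norm{G-H_0}_\infty\le V_N(s)$ for every $G\in\Tset(y)$, and then use the triangle inequality to get the factor $2$.

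The proof has a deterministic part and a probabilistic part. For the deterministic part, take $D=G-H_0$. Then $D/2\in\Sclass$, and $D$ has $\norm{D}_{\infty,\rho}\le 2K$, so Lemma~\ref{lem:curvature} gives $|D''|\le 2K\rho/(\rho-1)^2$. Let $\theta_*$ maximize $|D|$, with $D(e^{j\theta_*})=Re^{j\alpha}$. A Taylor expansion as in Theorem~\ref{thm:upper} shows that, for all $|u|\le m\Delta$, we have $|D(e^{j(\theta_*+u)})-Re^{j\alpha}|\le R\cdot(\text{small})+C(m\Delta)^2$. More precisely, I would bound the real part along $e^{j\alpha}$ from below by $R-\tfrac12|D''|(m\Delta)^2$ and the imaginary part by $|D'|\,m\Delta$, where $|D'|$ is controlled through $|D'(\theta_*)|\cdot$ and a second-order term. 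So the phase of $D$ stays within $\pi/4$ of $\alpha$ over a window of $m$ consecutive samples once $R$ dominates $C(m\Delta)^2$. If any sample $n$ in that window has $s_n\le s$ and $\phi_n$ in the arc of length $\pi/2$ centered at $\alpha+\pi$, then $R-\tfrac12|D''|(m\Delta)^2\le\sqrt2 s$ (via the projection argument above, using $|D+\eta_n|\le\varepsilon$ and $|\eta_n|\ge\varepsilon-s$). This yields $R\le\sqrt2 s+O(K\rho m^2/((\rho-1)^2N^2))$. Matching constants suggests taking $m=3+32\Lambda/q(s)$.

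The probabilistic part must make this work simultaneously for all $G$, since $\alpha$ and $\theta_*$ depend on $G$. To handle this, discretize: partition the circle into $O(N)$ windows of $m$ consecutive samples, shifted so that every $\theta_*$ lies in the middle of some window, and partition the phases into $8$ arcs of length $\pi/4$, so that every target arc of length $\pi/2$ contains one of them. For a fixed window and a fixed arc $I$, \eqref{eq:phase} gives per-sample success probability $q(s)/8$. Independence holds over the roughly $m/2$ samples lying in $(0,\pi)$; a window near $0$ or $\pi$ should use the conjugate samples, which carry conjugated phases, and this needs care. The failure probability is therefore at most $(1-q(s)/8)^{m/4}\le e^{-mq(s)/32}$. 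A union bound over at most $4N\cdot8=32N$ pairs of window and arc, with $m\ge32\Lambda/q(s)$, gives total failure probability at most $\delta$. The condition $N\ge2m$ ensures that a window fits in a half circle.

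I expect the main obstacle to be the phase-stability step. One needs the phase of $D$ at the good sample to stay aligned with $\alpha$ without any a priori lower bound on $R$. The case $R\le\sqrt2 s+C m^2\Delta^2$ is trivial, and otherwise the curvature term controls the deviation of both the real and imaginary parts relative to $R$. Getting exactly the constants $32\pi^2$ and $3+32\Lambda/q(s)$ requires bookkeeping that I would defer. Parts (a) and (b) are then substitutions. In (a), $q(0)>0$ is constant, so $V_N=O(\Lambda^2/N^2)$. In (b), $q(s)\ge c_\beta(s/\varepsilon)^\beta$, and balancing $s$ against $\Lambda^2/(N^2q(s)^2)$ gives $s^{1+2\beta}\propto(\Lambda/N)^2$, which matches the stated choice. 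For large $N$ this choice of $s$ satisfies the constraint $N\ge2m$.
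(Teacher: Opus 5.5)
\textbf{Gap: the first-order drift of the imaginary part.} Your reduction to $\norm{G-H_0}_\infty\le V_N(s)$ plus the triangle inequality matches the paper, and so does the covering/union-bound idea. The deterministic step, however, has a genuine gap. Write $e^{-j\alpha}D(e^{j(\theta_*+u)})=g(u)+j\,l(u)$. Lemma~\ref{lem:curvature}, applied to $D/2\in\Sclass$, controls only second-order terms: $g(u)\ge R-\tfrac{L_2}{2}u^2$ and $l(u)=\lambda u+r(u)$ with $|r(u)|\le\tfrac{L_2}{2}u^2$, where $L_2=2K\rho/(\rho-1)^2$. Nothing you invoke makes the linear coefficient $\lambda=l'(0)=\Im\{e^{-j\alpha}F'(\theta_*)\}$ small relative to $R$; Ruscheweyh's inequality only gives $|\lambda|\le 2K\rho/(\rho^2-1)$. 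So your claim that the curvature term controls the deviation of the imaginary part relative to $R$ is false. The phase of $D$ rotates by roughly $|\lambda|u/R$, a first-order quantity, so in the regime $L_2h^2\lesssim R\lesssim|\lambda|h$ phase stability is unsupported. Feeding $|l|\le|\lambda|h+\tfrac{L_2}{2}h^2$ into the projection inequality $g\cos\psi+l\sin\psi\le s$ only gives $R\le\sqrt2\,s+L_2h^2+|\lambda|h$, which yields an $O(N^{-1}\ln(N/\delta))$ rate in (a) instead of $N^{-2}$.

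The angle bookkeeping also fails as written. If the phase of $D$ may deviate by up to $\pi/4$ from $\alpha$ and the noise phase by up to $\pi/4$ from $\alpha+\pi$, the misalignment can reach $\pi/2$, and the projection then gives nothing; the $\sqrt2\,s$ term needs total misalignment at most $\pi/4$. A minor point: with $D=G-H_0$ the constraint is $|D(e^{j\theta_n})-\eta_n|\le\varepsilon$, not $|D+\eta_n|\le\varepsilon$. Your condition that $D$ point within $\pi/4$ of $-\eta_n$ is the one consistent with the correct sign.

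\textbf{The missing idea is to control the sign of the linear term, not its size.} The paper uses only forward samples, $\theta_k=\theta_*+v$ with $0\le v\le h$. It also requires the noise phase to lie on a prescribed side of $\alpha+\pi$: $\phi_k=\alpha+\pi+\psi$ with $0\le\sigma\psi\le\pi/4$ and $\sigma=\operatorname{sign}(\lambda)$. Then $\lambda v\sin\psi\ge0$ and only the quadratic remainders survive. The bound $M\cos\psi\le s+\tfrac{L_2}{2}v^2(\cos\psi+|\sin\psi|)$ gives exactly $M\le\sqrt2\,s+L_2h^2=V_N(s)$ with $h=\tfrac{4\pi}{N}(3+32\Lambda/q(s))$. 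Because $\sigma$ depends on $G$, the covering event must hold for directed objects. These are $N$ forward blocks of $m$ consecutive samples and $32$ directed arcs $J_{\ell,\sigma}$ of length $\pi/8$ (16 reference angles, two sides each). Each sample succeeds with probability $q(s)/16$, and each block has at least $(m-2)/2$ independent retained samples, so the union bound gives $32N e^{-\Lambda}=\delta$. Your symmetric windows and symmetric arcs cannot encode this sign choice.

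Your phase-stability route could instead be rescued by a Bernstein-type estimate, $|F'(\theta_*)|\le eR\log(2K/R)/\log\rho$, obtained from Hadamard's three-circle theorem plus a Cauchy estimate. In the nontrivial regime $R>L_2h^2$ this makes the relative drift $O(h\log(1/h))$. That only yields $V_N(s)\,(1+O(h\log(1/h)))$ for sufficiently large $N$, not the stated bound under the sole hypothesis $N\ge2(3+32\Lambda/q(s))$. Parts (a) and (b) are then substitutions, as you say.
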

\begin{proof}
The proof is given in Appendix~\ref{app:stochastic-upper}.
\end{proof}

\begin{remark}
For a fixed $s>0$, the stochastic term satisfies
\[
 2V_N(s)=2\sqrt2\,s+O\!\left(\frac{\ln^2(N/\delta)}{N^2}\right).
\]
Thus it approaches a positive tolerance with a quadratic rate up to a logarithmic factor. Assumption~\ref{ass:noise} allows $s$ to be chosen arbitrarily small, but a bound converging to zero requires $s$ to decrease with $N$, unless $q(0)>0$. The resulting rate depends on how quickly $q(s)$ decreases as $s\downarrow0$. For uniform disk noise, part~(b) gives $O((\ln(N/\delta)/N)^{2/3})$, while Theorem~\ref{thm:stoch-lower} gives a lower bound of $\Omega(N^{-1})$. Therefore, these bounds do not establish the exact stochastic convergence rate.
\end{remark}

\section{Numerical Examples}\label{sec:numerical}
We use two Monte Carlo experiments to examine how the mean identification error decreases as the number of frequency samples increases. The first considers errors above the noise bound $\varepsilon$ and compares their decay with the $O(N^{-2})$ sampling dependence of the deterministic bound in Theorem~\ref{thm:upper}. The second examines convergence below $\varepsilon$ and compares two noise distributions with the same magnitude bound, illustrating the dependence on the noise distribution described by Theorem~\ref{teo:upper_stochastic}. We model the semidefinite feasibility problems used in the identification process in YALMIP \citep{Lofberg2004}.

\subsection{Effect of the number of frequency samples}
\label{sec:experiment-fixed-system}
We consider a fixed fourth-order system obtained by rounding the transfer-function coefficients of a randomly generated stable system to two decimal places:
\begin{equation}\label{eq:experiment-plant}
H_0(z)=\frac{-2.81z+3.87z^2+0.92z^3-1.41z^4}
{1-0.77z-0.56z^2+0.38z^3+0.01z^4},
\end{equation}
with prior parameters $\rho=1.08$, $K=100$, and noise bound $\varepsilon=0.70$. For each $N\in\{4,8,12,16,24,32,\allowbreak40,48,56,64,80\}$, we use the equally spaced full-circle grid \eqref{eq:halfgrid}. Noise is uniform on the disk $|\eta|\le\varepsilon$, independently drawn at the $N/2$ positive-half-circle nodes and completed by conjugation. We repeat identification with $N_{\mathrm{exp}}=100$ independent noise realizations at each $N$. For each realization, we follow the consistency and interpolation approach of \citet[Section~IV]{ChenNettFan1995}. We first find corrections within the noise bound that make the measurements consistent with the prior, then construct a central Nevanlinna--Pick interpolant of the corrected data. Figure~\ref{fig:experiment-convergence} shows that the mean error closely follows the $N^{-2}$ reference over $N=32$--80, consistent with the quadratic sampling dependence of the bounds in Corollary~\ref{cor:error}.

\begin{figure}[t]
\centering
\includegraphics[width=\columnwidth]{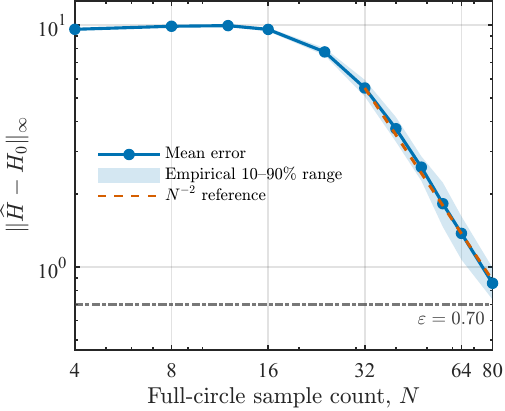}
\caption{Mean identification error versus $N$ over 100 noise realizations. Shading indicates the empirical 10th--90th percentile range. The dashed line is an $N^{-2}$ reference. The horizontal dash-dot line marks the noise bound $\varepsilon=0.70$.}
\label{fig:experiment-convergence}
\end{figure}

\subsection{Effect of the noise distribution}
\label{sec:experiment-noise-distribution}
To observe errors below the noise bound with at most 80 frequency samples, we use the following first-order system:
\[
H_0(z)=\frac{0.50z}{1-0.50z},
\]
with $\rho=1.25$, $K=2$, and $\varepsilon=0.50$. We compare noise uniform on the disk $|\eta|\le\varepsilon$ with noise of fixed magnitude $\varepsilon$ and uniform phase. In Theorem~\ref{teo:upper_stochastic}, these distributions correspond to $\beta=1$ and $\beta=0$, yielding upper-bound rates $O(N^{-2/3})$ and $O(N^{-2})$, respectively, omitting the logarithmic factors. For each $N\in\{8,16,24,32,40,48,64,80\}$, we use the grid \eqref{eq:halfgrid}, conjugate completion, and the same central Nevanlinna--Pick identification procedure as in Section~\ref{sec:experiment-fixed-system}. Figure~\ref{fig:experiment-noise-distribution} shows that the error for uniform disk noise closely follows the $N^{-2/3}$ reference. Fixed-magnitude noise gives faster decay, with the curve closer to the $N^{-2}$ reference at larger $N$.

\begin{figure}[t]
\centering
\includegraphics[width=\columnwidth]{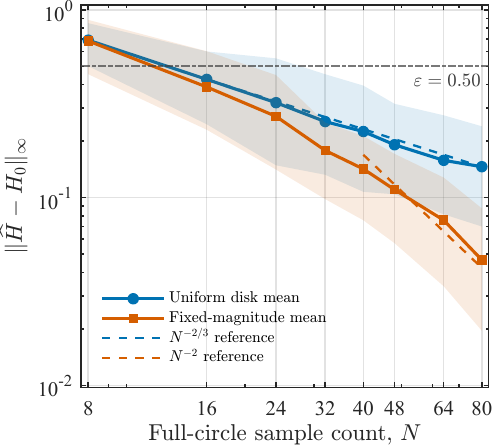}
\caption{Identification error over 100 noise realizations for each distribution. Shading indicates the empirical 10th--90th percentile range. Dashed lines show $N^{-2/3}$ and $N^{-2}$ reference rates, omitting the logarithmic factors in Theorem~\ref{teo:upper_stochastic}. The horizontal dash-dot line marks the noise bound $\varepsilon=0.50$.}
\label{fig:experiment-noise-distribution}
\end{figure}

\section{Conclusion}\label{sec:conclusion}
We studied the sample complexity of frequency domain identification under bounded noise. We improved previous worst-case bounds on the radius of the consistency set for zero measurements, establishing convergence to the noise level at the tight quadratic rate $\Theta(N^{-2})$. For bounded stochastic noise, the bounds can decrease below the noise level and converge to zero. Their convergence rate depends on how the noise distribution is concentrated near the boundary of the noise set. Future work will focus on sharpening the stochastic bounds to close the gap between the upper and lower rates, and on using the resulting uncertainty bounds in robust controller design.

\appendix
\section{Proof of Lemma~\ref{lem:curvature}}\label{app:proofs}
\begin{proof}
For a function $f$ analytic on $\D$ with $|f(w)|\le1$, Ruscheweyh's inequality, as stated in \citet[Theorem~1]{AndersonRovnyak2006}, gives
\begin{equation}\label{eq:ruscheweyh}
 \frac{|f^{(m)}(w)|}{m!}
 \le\frac{1-|f(w)|^2}
 {(1-|w|)^{m-1}(1-|w|^2)},\qquad m\ge1.
\end{equation}
Apply this to $f(w)=H(\rho w)/K$ at $w=z/\rho$, $|z|=1$. For $m=1,2$ it gives
\begin{align*}
 |H'(z)|&\le\frac{K\rho}{(\rho-1)(\rho+1)}
                  \left(1-\frac{|H(z)|^2}{K^2}\right),\\
 |H''(z)|&\le\frac{2K\rho}{(\rho-1)^2(\rho+1)}
                  \left(1-\frac{|H(z)|^2}{K^2}\right).
\end{align*}
Differentiating $F(\theta)=H(e^{j\theta})$ with respect to $\theta$, and writing $z=e^{j\theta}$, gives
\begin{align*}
 F'(\theta)&=jzH'(z),\\
 F''(\theta)&=-zH'(z)-z^2H''(z).
\end{align*}
The triangle inequality and the two derivative bounds give \eqref{eq:curvature}, since
\[
 \frac{K\rho}{(\rho-1)(\rho+1)}+
 \frac{2K\rho}{(\rho-1)^2(\rho+1)}
 =\frac{K\rho}{(\rho-1)^2}.
\]
\end{proof}

\section{Proof of Theorem \ref{thm:stoch-lower}}\label{app:stochastic-lower}
\begin{proof}
Take $H_0=0$, so $y_n=\eta_n$, and fix $0<a\le\min\{K,\varepsilon\}$. The constant systems $H_\pm(z)=\pm a$ belong to $\Sclass$. Both are consistent at sample $n$ exactly when
\[
 |\eta_n-a|\le\varepsilon,\qquad |\eta_n+a|\le\varepsilon.
\]
For a sample with $0<\theta_n<\pi$, this event is $\eta_n\in L_a$, where
\[
 L_a=(a+\Neps)\cap(-a+\Neps)
\]
is the lens formed by two disks of radius $\varepsilon$ with centers separated by $2a$. This lens lies inside $\Neps$. Since the noise is uniform on $\Neps$, its probability is the ratio of the lens area to $\pi\varepsilon^2$:
\[
 \Pr_\eta\{\eta_n\in L_a\}
 =\frac{2\varepsilon^2\arccos(a/\varepsilon)
       -2a\sqrt{\varepsilon^2-a^2}}{\pi\varepsilon^2}
 =\ell(a/\varepsilon).
\]
The lens is invariant under conjugation, so each conjugate pair contributes one independent event. At a real endpoint, both candidates are consistent exactly when $|\eta_n|\le\varepsilon-a$. Uniform noise on $[-\varepsilon,\varepsilon]$ gives this event probability $1-a/\varepsilon$. Independence before completion therefore yields
\begin{equation}\label{eq:uniform_lens_event}
 \Pr_\eta\{H_+,H_-\in\Tset(y)\}
 =\ell(a/\varepsilon)^p(1-a/\varepsilon)^r.
\end{equation}
On this event, $d(\Tset(y))\ge\norm{H_+-H_-}_\infty=2a$.

The function $\ell$ is continuous and strictly decreasing from $1$ to $0$ on $[0,1]$, since $\ell'(x)=-(4/\pi)\sqrt{1-x^2}<0$ for $0<x<1$. Thus \eqref{eq:lens_threshold} has a unique solution $x_N(\delta)\in(0,1)$. For every $0<a<\min\{K,\varepsilon x_N(\delta)\}$, the probability in \eqref{eq:uniform_lens_event} exceeds $\delta$. If $b_N(\delta)<2a$, then whenever both $H_+$ and $H_-$ are consistent, the diameter exceeds $b_N(\delta)$. This event has probability greater than $\delta$, violating the probability guarantee in Definition~\ref{def:stochasticdiameter}. Hence $b_N(\delta)\ge2a$. Since this holds for every $a<\min\{K,\varepsilon x_N(\delta)\}$, letting $a$ approach this minimum from below proves \eqref{eq:uniform_diameter}.

It remains to establish \eqref{eq:uniform_diameter_rate}. The expansion at $x=0$ gives
\[
 \ell(x)=1-\frac4\pi x+O(x^3),\qquad
 -\ln\ell(x)=\frac4\pi x+O(x^2).
\]
Because $r\le2$ and $p=(N-r)/2\to\infty$, \eqref{eq:lens_threshold} implies $x_N\to0$. Taking logarithms in that equation and using $-\ln(1-x)=x+O(x^2)$ yields
\begin{equation}\label{eq:lens_log_expansion}
 \ln(1/\delta)=\left(\frac{4p}{\pi}+r\right)x_N+O(Nx_N^2).
\end{equation}
Since $x_N\to0$ and $-\ln\ell(x)\ge(2/\pi)x$ for all sufficiently small positive $x$, the exact logarithmic form of \eqref{eq:lens_threshold} gives, for all sufficiently large $N$,
\[
 \ln(1/\delta)=-p\ln\ell(x_N)-r\ln(1-x_N)\ge\frac{2p}{\pi}x_N.
\]
Hence, using $2p=N-r$,
\[
 0<x_N\le\frac{\pi\ln(1/\delta)}{2p}
       =\frac{\pi\ln(1/\delta)}{N-r}.
\]
Therefore, $x_N=O(N^{-1})$. Substituting this estimate into \eqref{eq:lens_log_expansion} gives
\begin{equation}\label{eq:lens_threshold_asymptotic}
 x_N=\frac{\ln(1/\delta)}{4p/\pi+r}+O(N^{-2})
     =\frac{\pi\ln(1/\delta)}{2N}+O(N^{-2}).
\end{equation}
For fixed $K,\varepsilon>0$, we have $\varepsilon x_N<K$ for all sufficiently large $N$. Multiplying \eqref{eq:lens_threshold_asymptotic} by $2\varepsilon$ proves \eqref{eq:uniform_diameter_rate}.
\end{proof}

\section{Proof of Theorem \ref{teo:upper_stochastic}}\label{app:stochastic-upper}
We first show that the assumptions of Theorem~\ref{teo:upper_stochastic} ensure, with high probability, a nearby frequency grid point whose associated noise value is close to the boundary and has the required phase. All frequency and phase relations below are understood modulo $2\pi$.
\begin{lemma}\label{lem:covering}
Under the assumptions of Theorem~\ref{teo:upper_stochastic}, fix $0<\delta<1$ and $s$ as in that theorem, and set
\[
 \Lambda=\ln(32N/\delta),\qquad
 h=\frac{4\pi}{N}\left(3+\frac{32\Lambda}{q(s)}\right).
\]
At each sampled frequency $\theta_k$, write $\eta_k=(\varepsilon-s_k)e^{j\phi_k}$, where $s_k=\varepsilon-|\eta_k|$ is the distance to the boundary of the noise disk and $\phi_k$ is the noise phase.

With probability at least $1-\delta$, simultaneously for every frequency $\theta$, every angle $\alpha$, and every direction $\sigma\in\{-1,+1\}$, there exist an index $k$ and offsets $v,\psi$ satisfying:
\begin{itemize}
\item The sampled frequency is nearby: moving forward from $\theta$ around the frequency circle reaches $\theta_k$ within distance $h$,
\begin{equation}\label{eq:freqcover}
 \theta_k=\theta+v,\qquad 0\le v\le h.
\end{equation}
\item The noise is close to the boundary: its magnitude is within $s$ of $\varepsilon$,
\[
 s_k\le s,\qquad\text{equivalently}\qquad
 |\eta_k|\ge\varepsilon-s.
\]
\item The noise phase is close to the direction opposite to $\alpha$:
\begin{equation}\label{eq:phasecover}
 \phi_k=\alpha+\pi+\psi,\qquad
 0\le\sigma\psi\le\frac{\pi}{4}.
\end{equation}
Thus the phase differs from $\alpha+\pi$ by at most $\pi/4$, with $\sigma$ selecting the side of that direction.
\end{itemize}
The index $k$ may depend on $\theta$, $\alpha$, and $\sigma$. In simpler terms, with probability at least $1-\delta$, wherever we look in frequency and whichever noise direction we need, there is a nearby frequency grid point whose associated noise value is close to the boundary and has the required phase.
\end{lemma}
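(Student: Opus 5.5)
The plan is to reduce the uncountable family of requirements, indexed by $(\theta,\alpha,\sigma)$, to finitely many discrete events, bound each by independence, and finish with a union bound. First I discretize. I partition the frequency circle into consecutive blocks of grid indices, each containing $m$ consecutive samples, where $m$ is chosen so that $m\Delta = m\cdot 2\pi/N$ is about $h/2$. Concretely, I would take
\[
 m=\left\lceil \tfrac{N h}{4\pi}\right\rceil-1\approx 2+\tfrac{32\Lambda}{q(s)} .
\]
Then, for any $\theta$, the forward arc $[\theta,\theta+h]$ contains at least one entire block. The margin comes from the ``$3+$'' in $h$: one block length is lost to the partial block at the start. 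Likewise, I partition the phase circle into $8$ arcs of length $\pi/4$. For any $\alpha$ and $\sigma$, the one-sided arc $\{\alpha+\pi+\psi:0\le\sigma\psi\le\pi/4\}$ has length $\pi/4$. It therefore contains a full arc of length $\pi/8$ from a partition into $16$ arcs of length $\pi/8$. So it suffices to show that, with probability at least $1-\delta$, every pair (block $B$, phase arc $J$ of length $\pi/8$) contains some $k\in B$ with $s_k\le s$ and $\phi_k\in J$. The number of such pairs is at most $16\lceil N/m\rceil\le 32N$, which matches the $32N$ inside $\Lambda$.

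Next I bound the failure probability of a single pair. Within a block, I keep only indices with $0<\theta_k<\pi$. These are independent and identically distributed, and by \eqref{eq:phase} each satisfies
\[
 \Pr\{s_k\le s,\ \phi_k\in J\}=\tfrac{1}{16}q(s).
\]
Conjugate partners in $(\pi,2\pi)$ have phase $-\phi_k$, so they are not independent of the partner's event. For a block lying in $(\pi,2\pi)$, I instead use the partner noises in $(0,\pi)$: the event $\phi_k\in J$ is equivalent to $\phi_{k'}\in -J$, which has the same probability. If a block straddles $0$ or $\pi$, I discard the at most one real sample and any duplicated partners, and retain at least $(m-2)/2$ independent samples. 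The factor $2$ in the hypothesis $N\ge 2(3+32\Lambda/q(s))$, together with the slack in $m$, should accommodate this. The failure probability is then at most
\[
 \bigl(1-q(s)/16\bigr)^{(m-2)/2}\le \exp\!\bigl(-q(s)(m-2)/32\bigr)\le \exp(-\Lambda)=\delta/(32N).
\]
The hypothesis $N\ge 2(3+32\Lambda/q(s))$ also guarantees $h\le 2\pi$ and that blocks fit inside the circle.

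Finally, a union bound over the at most $32N$ pairs gives total failure probability at most $\delta$. On the complementary event, for any $(\theta,\alpha,\sigma)$ I select the contained block and phase arc and obtain the required index $k$. The offsets $v\in[0,h]$ and $\psi$ with $0\le\sigma\psi\le\pi/4$ follow by construction. I expect the main obstacle to be the bookkeeping in this last stage of the construction. Specifically, the constants must ensure that every forward window of length $h$ contains a full block with enough genuinely independent samples after conjugate pairing and the removal of the real endpoints. I would first try blocks of $m=2+\lceil 32\Lambda/q(s)\rceil$ consecutive samples and check the count of surviving independent samples directly against the $3+32\Lambda/q(s)$ in $h$.
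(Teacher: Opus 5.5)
Your proposal is correct and is essentially the paper's argument: a finite cover by blocks of $m\ge 2+32\Lambda/q(s)$ consecutive grid points and by phase arcs of length $\pi/8$, at least $(m-2)/2$ independent samples per block after discarding real endpoints and conjugate duplicates, a per-pair failure bound $e^{-\Lambda}=\delta/(32N)$, and a union bound over at most $32N$ pairs. The only difference is that the paper uses all $N$ sliding blocks $I_i=\{i,\dots,i+m-1\}$, so it needs only $h/2$ of the frequency window. Your non-overlapping blocks use essentially the full window $h$, and when $m\nmid N$ they must be taken with the last block wrapping around rather than as a true partition; your count $16\lceil N/m\rceil$ already accommodates this.
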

\begin{proof}
We use a finite covering and union bound argument, similar to the proof of Theorem~2.1 in \citet{ReznikovSaff2016}, adapted here to consecutive frequency grid points and noise phase intervals. Set $m=\lfloor Nh/(4\pi)\rfloor$. Using the definition of $h$ and the condition $N\ge2(3+32\Lambda/q(s))$ in Theorem~\ref{teo:upper_stochastic}, we obtain
\begin{equation}\label{eq:m-bound}
 m=\left\lfloor3+\frac{32\Lambda}{q(s)}\right\rfloor
 \ge2+\frac{32\Lambda}{q(s)},\qquad m\le N/2.
\end{equation}
For each starting index $i=0,\ldots,N-1$, define the block of indices $I_i=\{i,i+1,\ldots,i+m-1\}$, with all indices taken modulo $N$. Thus each of the $N$ blocks corresponds to $m$ consecutive frequency grid points $\theta_n$ and their associated noise values $\eta_n$, $n\in I_i$. To specify the noise phase intervals, introduce the reference angles $\alpha_\ell=\ell\pi/8$, $\ell=0,\ldots,15$. For each reference angle, consider the directed arcs
\[
 J_{\ell,\sigma}=\{\alpha_\ell+\pi+\sigma x:0\le x\le\pi/8\},
 \qquad\sigma\in\{-1,+1\}.
\]
Each arc has length $\pi/8$. Thus, for each index $n$ with $\theta_n\notin\{0,\pi\}$, circular symmetry and \eqref{eq:phase} give
\[
 \Pr_\eta\{s_n\le s,\ \phi_n\in J_{\ell,\sigma}\}=q(s)/16.
\]
Fix a block $I_i$ and an arc $J_{\ell,\sigma}$. Exclude indices with $\theta_n\in\{0,\pi\}$, and retain one predetermined index from each pair of conjugate frequency grid points represented in $I_i$. Since at most two indices are excluded at the endpoints and each pair contributes at most two indices, at least $(m-2)/2$ indices remain. Their associated noise variables are independent, and each satisfies $s_n\le s$ and $\phi_n\in J_{\ell,\sigma}$ with probability $q(s)/16$. If no index in $I_i$ satisfies both conditions, none of the retained indices does. Therefore,
\[
\begin{aligned}
 &\Pr_\eta\{\forall n\in I_i:\ s_n>s\ \text{or}\ \phi_n\notin J_{\ell,\sigma}\}
 \\
 &\qquad\le (1-q(s)/16)^{(m-2)/2}.
\end{aligned}
\]
By \eqref{eq:m-bound}, this upper bound is at most $e^{-(m-2)q(s)/32}\le e^{-\Lambda}$.
A union bound shows that, with probability at least $1-\delta$, every block contains a suitable noise sample for each of the specified phase intervals.

On this event, fix any frequency $\theta$, angle $\alpha$, and sign $\sigma$. Let $\theta_{n_0}$ be the first frequency grid point at or after $\theta$ in the forward direction around the circle, and choose the index block $I_{n_0}$. The grid point $\theta_{n_0}$ is at most $2\pi/N$ ahead of $\theta$. Every other grid point indexed by $I_{n_0}$ is at most $(m-1)2\pi/N$ ahead of $\theta_{n_0}$. Hence every $\theta_n$ with $n\in I_{n_0}$ has forward distance at most $2\pi m/N\le h/2$ from $\theta$. Choose the reference angle $\alpha_\ell$ at or immediately after $\alpha$ in the direction $\sigma$, so $\alpha_\ell=\alpha+\sigma w$ for some $0\le w<\pi/8$. For this block and this arc $J_{\ell,\sigma}$, there is an index $k\in I_{n_0}$ whose associated noise value satisfies $s_k\le s$ and $\phi_k\in J_{\ell,\sigma}$. By the definition of the arc,
\[
 \phi_k=\alpha_\ell+\pi+\sigma x
       =\alpha+\pi+\sigma(w+x),\qquad 0\le x\le\pi/8.
\]
Setting $\psi=\sigma(w+x)$ gives $0\le\sigma\psi\le\pi/4$, proving \eqref{eq:phasecover}. Since $k\in I_{n_0}$, the frequency grid point $\theta_k$ satisfies $\theta_k=\theta+v$ modulo $2\pi$ for a forward distance $0\le v\le2\pi m/N\le h/2\le h$. Together with $s_k\le s$, this proves \eqref{eq:freqcover} and completes the proof.
\end{proof}
\begin{proof}[Proof of Theorem~\ref{teo:upper_stochastic}]
We combine the Taylor argument from the proof of Theorem~\ref{thm:upper} with Lemma~\ref{lem:covering}, which provides a nearby sampled frequency whose noise is close to the boundary and has a suitable phase. Let $H\in\Tset(y)$ and define $G=H-H_0$. Since $H$ is consistent with the measurements and $y_k=H_0(e^{j\theta_k})+\eta_k$, we have, at each sampled frequency,
\[
|G(e^{j\theta_k})-\eta_k|
=|H(e^{j\theta_k})-y_k|\le\varepsilon.
\]

Write
\(
\eta_k=(\varepsilon-s_k)e^{j\phi_k}\) with
\(s_k = \varepsilon-|\eta_k|
\).
Projecting the inequality above onto the direction \(e^{j\phi_k}\) yields
\[
\Re\!\left(
e^{-j\phi_k}
\bigl(G(e^{j\theta_k})-\eta_k\bigr)
\right)
\ge -\varepsilon.
\]
Hence
\begin{equation}
\Re\!\left(e^{-j\phi_k}G(e^{j\theta_k})\right)\ge -s_k.
\label{eq:halfplane}
\end{equation}

Let \(F(\theta)= G(e^{j\theta}), \;
M = \|G\|_\infty\) and
choose \(\theta_*\) such that
\(
|F(\theta_*)|=M
\). Write
\(
F(\theta_*)=Me^{j\alpha}
\) and define
\[
g(v) =
\Re\!\left(e^{-j\alpha}F(\theta_*+v)\right),
\;
l(v)=
\Im\!\left(e^{-j\alpha}F(\theta_*+v)\right).
\]
Then \(
g(0)=M\) and \( l(0)=0\).
Moreover,
\[
g(v)\le |F(\theta_*+v)|\le M=g(0),
\]
so \(
g'(0)=0\). Let
\(
\lambda = l'(0)\). Since $G/2\in\Sclass$, Lemma~\ref{lem:curvature} gives
\[
 |F''(\theta)|\le\frac{2K\rho}{(\rho-1)^2}
 \left(1-\frac{|F(\theta)|^2}{4K^2}\right)
 \le \frac{2K\rho}{(\rho-1)^2}=L_2.
\]
Since taking real and imaginary parts cannot increase magnitude, this also gives $|g''(v)|\le L_2$ and $|l''(v)|\le L_2$. Applying Taylor's theorem to $g$, with $g(0)=M$ and $g'(0)=0$, yields
\begin{equation}
g(v)\ge M-\frac{L_2}{2}v^2.
\label{eq:g-taylor}
\end{equation}
Similarly, applying Taylor's theorem to $l$, with $l(0)=0$ and $l'(0)=\lambda$, gives
\begin{equation}
l(v)=\lambda v+r(v),
\qquad
|r(v)|\le \frac{L_2}{2}v^2.
\label{eq:k-taylor}
\end{equation}

Fix $\theta_*$ and $\alpha$ as above, and choose
\[
\sigma=
\begin{cases}
+1,&\lambda\ge0,\\
-1,&\lambda<0.
\end{cases}
\]
Applying Lemma~\ref{lem:covering} with $\theta=\theta_*$ and these values of $\alpha$ and $\sigma$ gives, with probability at least $1-\delta$, an index \(k\) for which the sampled frequency $\theta_k$, the distance $s_k$ to the boundary of the noise disk, and the noise phase $\phi_k$ satisfy
\[
\theta_k=\theta_*+v,\; 0\le v\le h, \; 
s_k\le s,
\]
and
\[
\phi_k=\alpha+\pi+\psi,
\qquad
0\le \sigma\psi\le \frac{\pi}{4}.
\]
At this frequency grid point, $e^{-j\alpha}F(\theta_k)=g(v)+j l(v)$ and $\phi_k=\alpha+\pi+\psi$. Therefore,
\[
\Re\!\left(e^{-j\phi_k}F(\theta_k)\right)
=-g(v)\cos\psi-l(v)\sin\psi.
\]
Substituting this identity into \eqref{eq:halfplane} and using $s_k\le s$ gives
\begin{equation}
g(v)\cos\psi+l(v)\sin\psi
\le s_k\le s.
\label{eq:keyineq}
\end{equation}

The choice of $\sigma$ ensures that $\lambda$ and $\psi$ have the same sign when they are nonzero, so $\lambda\sin\psi\ge0$. Since $v\ge0$, the linear term in \eqref{eq:k-taylor} is nonnegative after multiplication by $\sin\psi$. Thus
\[
l(v)\sin\psi=\lambda v\sin\psi+r(v)\sin\psi
\ge-\frac{L_2}{2}v^2|\sin\psi|.
\]
Also, $|\psi|\le\pi/4$ implies $\cos\psi>0$. We may therefore multiply \eqref{eq:g-taylor} by $\cos\psi$ without changing the inequality. Combining the two Taylor bounds with \eqref{eq:keyineq} yields
\[
s
\ge
\left(M-\frac{L_2}{2}v^2\right)\cos\psi
-
\frac{L_2}{2}v^2|\sin\psi|.
\]
Rearranging and dividing by $\cos\psi>0$, we obtain
\[
M
\le
\frac{s}{\cos\psi}
+
\frac{L_2}{2}v^2
\left(1+|\tan\psi|\right).
\]
Using $\cos\psi\ge1/\sqrt2$, $|\tan\psi|\le1$, and $v\le h$ now gives
\[
M\le \sqrt2\,s+L_2h^2.
\]
Substituting $L_2=2K\rho/(\rho-1)^2$ and the value of $h$ defined in Lemma~\ref{lem:covering}, the right-hand side equals $V_N(s)$ in \eqref{eq:VN}. Since $M=\norm{H-H_0}_\infty$ and the argument holds for every consistent $H$ on the same event, we conclude that, with probability at least $1-\delta$,
\[
\sup_{H\in\Tset(y)}
\|H-H_0\|_\infty
\le
V_N(s).
\]

To bound the diameter, take any $H_1,H_2\in\Tset(y)$. On this event, the triangle inequality gives
\[
 \norm{H_1-H_2}_\infty
 \le\norm{H_1-H_0}_\infty+\norm{H_2-H_0}_\infty
 \le2V_N(s).
\]
Taking the supremum over $H_1,H_2$ gives $d(\Tset(y))\le2V_N(s)$. Combining this with the deterministic bound $d(\Tset(y))\le2U_N$ proves \eqref{eq:local-cor}.

For part~(a), $q(0)>0$ allows us to set $s=0$. The first term of $V_N(s)$ then vanishes, and the remaining term is $O(\Lambda^2/N^2)$, giving the stated rate.

For part~(b), put $x_N=\Lambda/(c_\beta N)$ and choose $s=\varepsilon x_N^{2/(1+2\beta)}$. The assumed inequality $q(s)\ge c_\beta(s/\varepsilon)^\beta$ gives
\[
 \frac{\Lambda}{Nq(s)}
 \le x_N\left(\frac{\varepsilon}{s}\right)^\beta
 =x_N^{1/(1+2\beta)}.
\]
Consequently, $s=O(x_N^{2/(1+2\beta)})$ and
\[
\frac{1}{N^2}\left(3+\frac{32\Lambda}{q(s)}\right)^2
=\left(\frac3N+\frac{32\Lambda}{Nq(s)}\right)^2
=O\!\left(x_N^{2/(1+2\beta)}\right).
\]
Both terms of $V_N(s)$ therefore have the claimed upper bound. Finally, $x_N\to0$ gives $s<\varepsilon$ for all sufficiently large $N$, and
\[
\frac{2}{N}\left(3+\frac{32\Lambda}{q(s)}\right)
\le\frac6N+64x_N^{1/(1+2\beta)}\to0.
\]
This shows that the LHS is at most one for sufficiently large $N$, so $N\ge2(3+32\Lambda/q(s))$, as required. This proves part~(b).
\end{proof}

\section*{Acknowledgments}
The authors used Codex (OpenAI) and Claude (Anthropic) for writing and coding assistance. The authors reviewed, edited, and approved all changes and take full responsibility for the content.

\bibliography{ifacconf}

@article{ChenNettFan1995,
  author = {Chen, Jie and Nett, Carl N. and Fan, Michael K. H.},
  title = {Worst Case System Identification in {$H_\infty$}: Validation of a priori Information, Essentially Optimal Algorithms, and Error Bounds},
  journal = {IEEE Transactions on Automatic Control},
  year = {1995}, volume = {40}, number = {7}, pages = {1260--1265},
  doi = {10.1109/9.400481}
}

@article{ChenGu2002,
  author = {Chen, Jie and Gu, Guoxiang},
  title = {Worst-Case Asymptotic Properties of {$H_\infty$} Identification},
  journal = {IEEE Transactions on Circuits and Systems I: Fundamental Theory and Applications},
  year = {2002}, volume = {49}, number = {4}, pages = {437--446},
  doi = {10.1109/81.995658}
}

@article{AndersonRovnyak2006,
  author = {Anderson, J. M. and Rovnyak, J.},
  title = {On Generalized {Schwarz--Pick} Estimates},
  journal = {Mathematika},
  year = {2006}, volume = {53}, pages = {161--168},
  doi = {10.1112/S0025579300000085}
}

@book{SanchezPenaSznaier1998,
  author = {{S\'{a}nchez-Pe\~{n}a}, Ricardo S. and Sznaier, Mario},
  title = {Robust Systems Theory and Applications},
  publisher = {John Wiley \& Sons},
  year = {1998}, isbn = {9780471176275}
}

@article{HelmickiJacobsonNett1991,
  author = {Helmicki, Arthur J. and Jacobson, Clas A. and Nett, Carl N.},
  title = {Control oriented system identification: a worst-case/\allowbreak deterministic approach in {$H_\infty$}},
  journal = {IEEE Transactions on Automatic Control},
  year = {1991}, volume = {36}, number = {10}, pages = {1163--1176},
  doi = {10.1109/9.90229}
}

@article{AkcayGuKhargonekar1994,
  author = {Ak\c{c}ay, H\"{u}seyin and Gu, Guoxiang and Khargonekar, Pramod P.},
  title = {Identification in {$H_\infty$} with nonuniformly spaced frequency response measurements},
  journal = {International Journal of Robust and Nonlinear Control},
  year = {1994}, volume = {4}, number = {4}, pages = {613--629},
  doi = {10.1002/rnc.4590040412}
}

@article{ParriloEtAl1998,
  author = {Parrilo, Pablo A. and Sznaier, Mario and {S\'{a}nchez Pe\~{n}a}, Ricardo S. and Inanc, T.},
  title = {Mixed Time/Frequency-Domain Based Robust Identification},
  journal = {Automatica},
  year = {1998}, volume = {34}, number = {11}, pages = {1375--1389},
  doi = {10.1016/S0005-1098(98)00083-1}
}

@article{MilaneseTaragna2005,
  author = {Milanese, Mario and Taragna, Michele},
  title = {{$H_\infty$} set membership identification: A survey},
  journal = {Automatica},
  year = {2005}, volume = {41}, number = {12}, pages = {2019--2032},
  doi = {10.1016/j.automatica.2005.07.007}
}

@inproceedings{XuLi2025,
  author = {Xu, Haonan and Li, Yingying},
  title = {On the Sample Complexity of Set Membership Estimation for Linear Systems with Disturbances Bounded by Convex Sets},
  booktitle = {2025 American Control Conference},
  year = {2025}, pages = {3856--3862},
  doi = {10.23919/ACC63710.2025.11107618}
}

@article{ZengYuOzay2025,
  author = {Zeng, Xiong and Yu, Jing and Ozay, Necmiye},
  title = {System Identification Under Bounded Noise: Optimal Rates Beyond Least Squares},
  journal = {IEEE Control Systems Letters},
  year = {2025}, volume = {9}, pages = {1087--1092},
  doi = {10.1109/LCSYS.2025.3580036}
}

@article{CeroneFossonPirreraRegruto2025,
author = {Cerone, V. and Fosson, S.M. and Pirrera, S. and Regruto, D.},
title={Set-membership identification of continuous-time {MIMO} systems via {Tustin} discretization},
journal = {{arXiv preprint}, arXiv:2508.19348},
year={2025}}

@inproceedings{MusaviEtAl2024,
  author = {Musavi, Negin and Guo, Ziyao and Dullerud, Geir and Li, Yingying},
  title = {Identification of Analytic Nonlinear Dynamical Systems with Non-asymptotic Guarantees},
  booktitle = {Advances in Neural Information Processing Systems},
  volume = {37},
  year = {2024},
  doi = {10.52202/079017-2715}
}

@article{Ljung2010,
  author = {Ljung, Lennart},
  title = {Perspectives on system identification},
  journal = {Annual Reviews in Control},
  year = {2010}, volume = {34}, number = {1}, pages = {1--12},
  doi = {10.1016/j.arcontrol.2009.12.001}
}

@article{TsiamisEtAl2023,
  author = {Tsiamis, Anastasios and Ziemann, Ingvar and Matni, Nikolai and Pappas, George J.},
  title = {Statistical Learning Theory for Control: A Finite-Sample Perspective},
  journal = {IEEE Control Systems Magazine},
  year = {2023}, volume = {43}, number = {6}, pages = {67--97},
  doi = {10.1109/MCS.2023.3310345}
}

@article{MilaneseVicino1991,
  author = {Milanese, Mario and Vicino, Antonio},
  title = {Optimal estimation theory for dynamic systems with set membership uncertainty: An overview},
  journal = {Automatica},
  year = {1991}, volume = {27}, number = {6}, pages = {997--1009},
  doi = {10.1016/0005-1098(91)90134-N}
}

@inproceedings{SchoukensPintelonRolain2004,
  author = {Schoukens, J. and Pintelon, Rik and Rolain, Yves},
  title = {Time domain identification, frequency domain identification. Equivalencies! Differences?},
  booktitle = {Proceedings of the 2004 American Control Conference},
  year = {2004}, volume = {1}, pages = {661--666},
  url = {https://skoge.folk.ntnu.no/prost/proceedings/acc04/Papers/0113_WeM01.2.pdf}
}

@article{HonarpishehSznaier2025,
  author = {Honarpisheh, Arya and Sznaier, Mario},
  title = {Frequency Response Identification of Low-Order Systems: Finite-Sample Analysis},
  journal = {arXiv preprint arXiv:2508.17142},
  year = {2025},
  doi = {10.48550/arXiv.2508.17142},
  url = {https://arxiv.org/abs/2508.17142}
}

@article{ReznikovSaff2016,
  author = {Reznikov, A. and Saff, E. B.},
  title = {The Covering Radius of Randomly Distributed Points on a Manifold},
  journal = {International Mathematics Research Notices},
  year = {2016},
  volume = {2016},
  number = {19},
  pages = {6065--6094},
  doi = {10.1093/imrn/rnv342}
}

@article{SharifGharaviAlkhansariKhalaj2003,
  author = {Sharif, Masoud and Gharavi-Alkhansari, Mohammad and Khalaj, Babak H.},
  title = {On the Peak-to-Average Power of {OFDM} Signals Based on Oversampling},
  journal = {IEEE Transactions on Communications},
  year = {2003},
  volume = {51},
  number = {1},
  pages = {72--78},
  doi = {10.1109/TCOMM.2002.807619}
}

@inproceedings{Lofberg2004,
  author = {L{\"o}fberg, Johan},
  title = {{YALMIP}: A Toolbox for Modeling and Optimization in {MATLAB}},
  booktitle = {Proceedings of the CACSD Conference},
  address = {Taipei, Taiwan},
  year = {2004}
}
\end{document}